\documentclass{article}

\PassOptionsToPackage{numbers, compress}{natbib}
\usepackage[preprint]{neurips_2026}

\usepackage[utf8]{inputenc} % allow utf-8 input
\usepackage[T1]{fontenc}    % use 8-bit T1 fonts
\usepackage{hyperref}       % hyperlinks
\usepackage{url}            % simple URL typesetting
\usepackage{booktabs}       % professional-quality tables
\usepackage{amsfonts}       % blackboard math symbols
\usepackage{nicefrac}       % compact symbols for 1/2, etc.
\usepackage{microtype}      % microtypography
\usepackage{xcolor}         % colors
\usepackage[pdftex]{graphicx}
\usepackage{amsmath}
\usepackage{amsthm}
\usepackage{physics}
\newtheorem{theorem}{Theorem}[section]

\newtheorem{lemma}[theorem]{Lemma}
\usepackage{amssymb}
\usepackage{geometry}
\usepackage{thm-restate}

\newcommand{\E}{\mathbb{E}}
\newcommand{\Var}{\operatorname{Var}}

\newcommand{\vone}{\mathbf{1}}
\newcommand{\bl}{\beta_L}
\DeclareMathOperator*{\argmin}{arg\,min}

\title{Synaptic clustering emerges from learning and supports covariance discrimination}

\author{%
  Ilenna S. Jones \thanks{Corresponding author.} \\
  Kempner Institute\\
  Harvard University\\
  Allston, Boston, MA 02134 \\
  \texttt{ijones@g.harvard.edu} \\
  \And
  Maceo D. Richards \\
  Kempner Institute \\
  Harvard University\\
  Allston, Boston, MA 02134\\
  \And
  Houman Safaai \\
  Kempner Institute \\
  Harvard University\\
  Allston, Boston, MA 02134\\
  \And
  Elom Amematsro\\
  Kempner Institute \\
  Harvard University\\
  Allston, Boston, MA 02134 \\
  \And
  Bernardo Sabatini \thanks{Corresponding author.} \\
  Kempner Institute \\
  Harvard University\\
  Allston, Boston, MA 02134 \\
  \texttt{bernardo\_sabatini@hms.harvard.edu} \\
}

\begin{document}

\maketitle

\begin{abstract}
Functional synapse clusters (FSCs) are synapses with correlated presynaptic activity that are colocalized on the same neuronal dendritic branch.  FSCs have been observed after learning in cortical and hippocampal pyramidal neurons.  However, previous efforts to ablate FSCs by pharmacologically blocking dendritic nonlinearities to establish causal necessity may have confounded effects. Therefore, whether FSCs are causally necessary for computation is unknown. Here, we attempt to isolate FSCs from this potential confounder in silico. We train Dendrinet, an artificial neural network architecture with hierarchical dendritic segments and sparse conductance-based synapses, on a Permuted-Covariance Classification (PCC) task. This task cannot be solved by single-layer linear-nonlinear artificial neural networks. We find that neurons with dendrites can be trained to solve the task and develop excitatory and inhibitory FSCs if both dendritic nonlinearities and synaptic structural plasticity are active. Turning off dendritic nonlinearities reduces excitatory FSCs, which replicates experimental findings, and reduces performance while unexpectedly increasing inhibitory FSCs. Furthermore, shuffling learned synaptic connectivity while keeping the nonlinearities fixed reduces performance. This shows sensitivity to learned connectivity, but the shuffle does not change only FSCs. Shuffling inhibitory synapse properties reduces performance more than the corresponding excitatory shuffle, showing higher sensitivity to inhibitory organization. This work suggests that dendritic compartmentalization and learned synaptic organization can support computation of covariance structure.
\end{abstract}

\section{Introduction}

Learning and memory depend on how input synapses are spatially organized onto their receiving dendritic trees \citep{iacaruso_synaptic_2017, kastellakis_synaptic_2019}. This spatial organization impacts signal processing due to the nonlinear ion-channel-derived excitability properties in the neural membrane \citep{london_dendritic_2005, stuart_dendritic_2015}. Synapses carry different sensory input tuning, and how those synapses are spatially arranged promises to reveal how neural computation occurs \citep{wilson_orientation_2016, iacaruso_synaptic_2017}.

Synaptic plasticity, via strengthening, weakening, and rewiring synapses, allows animals to adapt to their changing environment \citep{citri_synaptic_2008, holtmaat_experience-dependent_2009}. In response to synaptic plasticity, similarly tuned synapses form spatially clustered groups on dendritic trees, named functional synapse clusters (FSCs) \citep{kleindienst_activity-dependent_2011, takahashi_locally_2012, wilson_orientation_2016, kastellakis_synaptic_2019, kastellakis_dendritic_2023}. This phenomenon occurs across cortex and hippocampus, reflecting changes due to learning events and memory formation \citep{frank_hotspots_2018, lai_fear_2018, bloss_single_2018, kastellakis_synaptic_2019}. A key regulator of both plasticity and FSC formation is NMDA receptor (NMDAR) activity, which acts as a coincidence detector, activating only under sufficient local depolarization while glutamate is bound \citep{mayer_voltage-dependent_1984, nowak_magnesium_1984}. Because this depolarization must be built locally from multiple small excitatory postsynaptic potentials (EPSPs), co-activation of similarly tuned, proximal FSCs is well suited to provide it \citep{schiller_nmda_2000, major_active_2013}. In this feedback loop, supralinear NMDAR activity drives plasticity, plasticity builds FSCs, and co-active FSCs produce the local depolarization that drives NMDAR activity.

\begin{figure}[t!]
    \centering
    \includegraphics[width=1.0\linewidth]{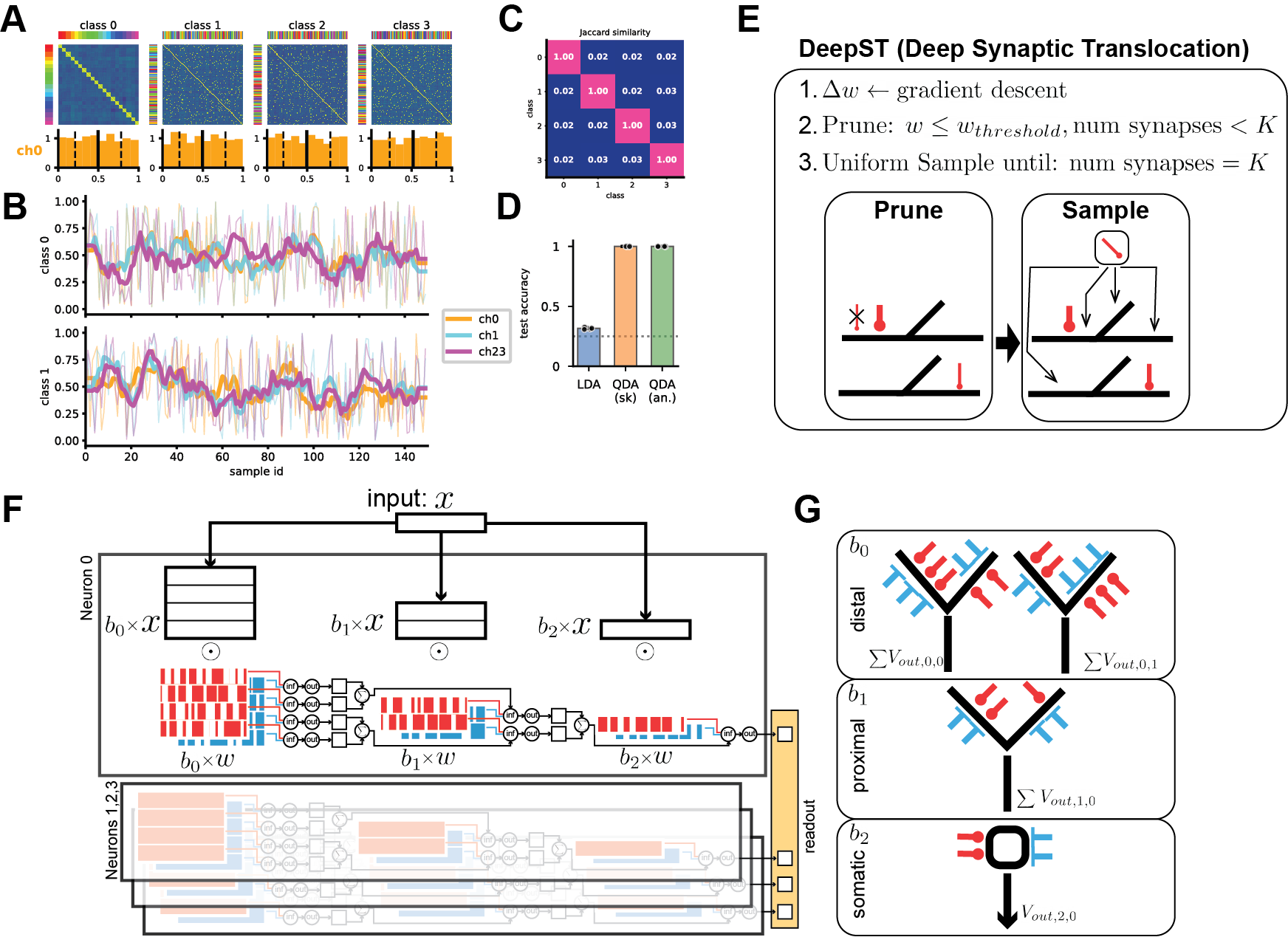}
    \caption{\textbf{Task optimization framework for dendritic network model} 
     a. Permuted Covariance Classification (PCC) Task. Each class differs by permutation of input channels, which interchanges  the rows and columns of their correlation matrices. Every channel belongs to a "block" of correlated input dimensions. No single channel has discrimination information because the first-order statistics (mean and variance) are equal across all classes. b. Samples from input channels. Channels 0 and 1 are highly correlated (r=0.9) in class 0 and have low correlation (r=0.3) in class 1. Channels 1 and 23 have low correlation (r=0.3) in class 0 and are highly correlated (r=0.9) in class 1. c. Jaccard similarity matrix shows that the classes have minimal similarity in their second-order correlation structure. d. Linear discriminant analysis (LDA) and quadratic discriminant analysis (QDA) performance shows this task is not trivially solved using a linear classifier and is solvable by quadratic discrimination. e. Deep Synaptic Translocation (DeepST) optimization rule. Weights in the network are updated with gradient descent, pruned if the weight value falls below a small threshold, and then replaced anywhere in the network until the network's fixed synapse capacity is reached. f. Dendritic neural network model "Dendrinet". Each of the four dendritic neurons in the network is a small, sparsely connected, binary tree network. g. A single neuron of the dendritic neural network model Dendrinet displayed as a cartoon neuron. Distal, proximal, and somatic branch layers of the dendritic neuron receive a sparse selection of excitatory and inhibitory inputs.}
    \label{fig:methods}
\end{figure}

Although FSCs have been found to reliably occur in response to learning and memory formation, it remains unclear whether FSCs support computations underlying these phenomena, or merely co-occur with them. Previous work attempting to test the necessity of FSCs for these events blocks formation of FSCs using Amino-phosphonovaleric acid (APV), a selective NMDAR antagonist \citep{morris_selective_1986, kleindienst_activity-dependent_2011, sehgal_compartmentalized_2025}. However, blocking NMDAR not only interrupts formation of FSCs, but also blocks supralinear NMDAR activity \citep{schiller_nmda_2000, antic_decade_2010}. Given that NMDAR-derived nonlinearities may be necessary for relevant computations \cite{poirazi_pyramidal_2003, polsky_computational_2004, beniaguev_single_2021, aizenbud_what_2026}, the observed behavioral and circuit perturbations could therefore reflect the loss of either FSCs or NMDAR activity.

To test necessity directly, we need to disrupt FSCs while leaving the nonlinearity intact. Since it is technologically infeasible to rewire synapses on dendrites \citep{holtmaat_experience-dependent_2009}, we turned to computational neuron modeling. We borrowed the task optimization framework from deep learning to build neurons that solve a task \citep{yamins_using_2016, richards_deep_2019, jones_might_2021}.  Tasks that require FSCs likely involve nonlinear interactions between input pairs and are thus sensitive to second-order structure \citep{mel_nmda-based_1992, mel_translation-invariant_1998}. Therefore, we use a task in which performance requires analysis of input covariance. For the neuron model, instead of asserting clusters a priori, we introduce biologically observed dendritic nonlinearities and a structural plasticity rewiring algorithm. This allows for the emergence of FSCs in the context of solving the task \citep{ujfalussy_impact_2020}. For the optimization algorithm, we borrow from the sparse neural network literature to introduce a simple structural plasticity rewiring rule that pairs with gradient descent optimization \citep{bellec_deep_2018}. FSCs emerge naturally during task optimization and can be ablated post-training to test their contributions to computation. 

Using this fully observable, "glass-box" neural learning apparatus \citep{hassija_interpreting_2024}, we determine how the model, when enabled by its biological properties, solves the task. The dendritic nonlinearities in the model consist of nonlinear synaptic integration with shunting inhibition and an NMDAR-derived nonlinearity. We find that these two nonlinearities and a structural plasticity rewiring algorithm are all necessary for effective performance of the task. Class-selective excitatory and inhibitory clustering appears in the distal-most branches of the dendritic tree. The trained branches show class-selective clustering and activity differences. Together these results suggest, but do not isolate, a contribution from FSCs.

\section{Methods}

\subsection{Task: Permuted Covariance Classification}

We designed the task such that class information lives in the relationships between channels rather than in the activity of any single channel, requiring the classifier to discriminate samples by their covariances as opposed to single-channel statistics (Fig. \ref{fig:methods}AB). Input samples are 200-dimensional vectors with block-diagonal covariance structure, and classes are generated by permutation of the original channel order, which permutes the rows and columns of the covariance structure for each class. To construct the classes, a base block-diagonal correlation matrix $R^{(0)}$ with N=200 input dimensions is built from 20 contiguous blocks of size 10. Input channels within the same block are highly correlated ($r = 0.9$) whereas those across different blocks have low correlation ($r = 0.3$). Per-class matrices $R^{(c)}$ for $c\in\{0, 1, 2, 3\}$ are constructed by applying class-specific permutations of input-dimension indices to $R^{(0)}$, shuffling the correlation structure across input dimensions. Each input sample has channels drawn via a Gaussian copula on $R^{(c)}$. Channels are first drawn in $z$-space, where $Z \sim \mathcal{N}(0, R^{(c)})$, and are then transformed to $x$-space via the cumulative distribution function, $X = \Phi(Z)$. As a result, each channel in $x$-space is uniformly distributed on $(0,1)$ with mean 0.5. The transform keeps the same block pattern of stronger and weaker dependence, but it does not keep the same Pearson correlation numbers. Since per-channel marginals are matched across classes, accuracy on PCC requires the model to encode pairwise feature relationships. Functional synapse clusters could plausibly support this encoding structure.

\subsection{Model: Dendrinet}
\subsubsection{Architecture}
Dendrinet is a feedforward network of dendritic neurons, with signals flowing from distal dendrite compartments toward the soma (Fig. \ref{fig:methods}FG). There are 4 excitatory output nodes acting as the classification readout, each connected to a binary branching dendritic tree. For example, a cell with a depth-2 dendritic tree has 3 branch layers consisting of 7 nodes total, with 4 distal nodes, 2 proximal nodes, and 1 soma node. This results in a Dendrinet with 4 cells that include 16 distal nodes, 8 proximal nodes, and 4 soma nodes as readouts. The excitatory readout cells receive sparse non-negative excitatory and inhibitory synaptic input from the same input sample, creating a feedforward circuit for both excitation and inhibition. Every branch node (including somatic) receives synaptic inputs and, if applicable, upstream child branch activation inputs. At initialization, the number of active synapses per branch averages 20, which is 10\% of the 200 input channels, but the exact count varies across branches under a fixed neuron-wide budget. No more than 1 synapse from a given input channel can connect onto the same branch. Synaptic weights are constrained to be positive by applying the softplus function to real-valued pre-weight parameters.

\subsubsection{Nonlinearities}
Excitatory and inhibitory synaptic integration occurs via conductance-based interactions  in the steady-state voltage equation \citep{koch_biophysics_1999}:

\begin{equation}\label{eq:vinf}
    V_\infty = f(g_{exc}, g_{inh}, g_v, V_{out}) = \frac{g_{exc} + \sum_i g_{v,i}V_{out, i}}{g_{exc} + g_{inh} + g_v + 1}
\end{equation}

where $g_E = w_E^T x$ and $g_I = w_I^T x$, such that $x$ is the input vector, and $w_E$ and $w_I$ are sparse connectivity vectors for excitation and inhibition respectively (Fig. \ref{fig:benchmark}E).  This equation arises from setting the excitatory reversal potential to 1, the inhibitory potential to 0, and the reversal potential of a fixed leak conductance (value 1) at 0.  The leak conductance can be fixed at 1 because all conductances appear as ratios.  Inhibition and leak conductances operate as "shunts" that appear only in the denominator and divisively lessen the effect of excitatory currents. Each branch also receives variable $V_{out}$ from upstream branch activations that are scaled by non-negative weight $g_v$. The output of this function produces a single branch "pre-activation" voltage between 0 and 1 that is further transformed by the Poirazi nonlinearity, which acts as an activation function (Fig. \ref{fig:benchmark}E). 

The Poirazi nonlinearity is an NMDAR-proxy function derived from \citet{poirazi_arithmetic_2003} and experimentally validated in \citet{polsky_computational_2004}. It is a piecewise linear-to-tanh function that approximates the thresholded amplifying effect of NMDAR-mediated dendritic spikes:

\begin{equation}\label{eq:vout}
V_{out} = h(V_\infty; m, b) = 
\begin{cases} 
b + (1-b) \tanh\left( \frac{ m (V_\infty - b) }{ 1 - b } \right) & \text{if } V_\infty > b \\ 
V_\infty & \text{if } 0 < V_\infty \le b \\ 
0 & \text{otherwise} 
\end{cases}
\end{equation}

The nonlinearity is rectified to zero for negative values. Hyperparameters $m$ and $b$ correspond to the slope of the tanh function and the threshold respectively. To choose the hyperparameters m and b, we used values that yielded optimal performance for a 7-node Dendrinet model with all biological components active. The output of this function produces a single branch activation between 0 and 1 that is received by downstream branches via the shunting inhibition function. These nonlinearities can be activated or inactivated for the purposes of testing their impact on model performance and the formation of functional synapse clustering.

\subsubsection{Initialization}
Each dendritic branch layer is initialized by drawing synaptic weights from a fan-in–scaled Gaussian, re-balancing the expected excitatory and inhibitory conductance of every branch, and setting each branch's resting operating point.
The pre-transform synaptic parameters for valence $v \in \{\text{exc}, \text{inh}\}$ are drawn from a Gaussian whose standard deviation follows Glorot/Xavier-normal scaling for an input dimensionality $N_v$:
\begin{equation}
\theta_v \sim \mathcal{N}\!\big(\mu_v,\ \sigma_v^2\big),
\qquad
\sigma_{\text{exc}} = \sqrt{\tfrac{2}{N_{\text{exc}}+1}},
\quad
\sigma_{\text{inh}} = \sqrt{\tfrac{2}{N_{\text{inh}}+1}}.
\end{equation}
The non-negative synaptic weight is the softplus transform of this parameter, $w = \mathrm{softplus}(\theta) = \log\!\big(1 + e^{\theta}\big)$, and only the $K$ strongest synapses per branch are retained. The excitatory mean is held at $\mu_{\text{exc}} = 0$; the inhibitory mean $\mu_{\text{inh}}$ is determined by the balancing step below. Drawing weights from this distribution alone leaves the expected excitatory and inhibitory conductance of each branch unbalanced. The expected per-branch conductance for valence $v$ is
\begin{equation}
\mathbb{E}[g_v] = \tfrac{1}{2}\, K_v\, \mathbb{E}\!\left[\,\mathrm{softplus}(X) \,\middle|\, X > \alpha_v\,\right],
\qquad
\alpha_v = \text{the } q_v\text{-quantile of } \mathcal{N}(\mu_v, \sigma_v^2),
\quad
q_v = \tfrac{K_v}{N_v},
\end{equation}
where the conditioning $X > \alpha_v$ reflects that only the strongest $K_v$ of $N_v$ synapses survive, and the factor $\tfrac{1}{2}$ is the expected fraction of active inputs. To remove the imbalance, the inhibitory mean $\mu_{\text{inh}}$ is adjusted — and the inhibitory standard deviation is scaled by the synapse-count ratio $K_{\text{inh}}/K_{\text{exc}}$ when that ratio is below 1 — until the expected excitatory and inhibitory conductance are approximately equal:
\begin{equation}
\mathbb{E}[g_{\text{exc}}] \approx \mathbb{E}[g_{\text{inh}}]
\quad\Longleftrightarrow\quad
\mathbb{E}[w_{\text{exc}}] \approx \frac{K_{\text{inh}}}{K_{\text{exc}}}\,\mathbb{E}[w_{\text{inh}}].
\end{equation}
From the balanced conductances, each branch's expected steady-state voltage and reactivation nonlinearity are initialized so the branch begins at a usable operating point. With shunting inhibition active, the expected steady-state synaptically-driven voltage (i.e., ignoring upstream branches) is
\begin{equation}
\mathbb{E}[V_\infty] = \frac{\mathbb{E}[g_{\text{exc}}]}{\mathbb{E}[g_{\text{exc}}] + \mathbb{E}[g_{\text{inh}}] + 1},
\end{equation}
which is centered near $0.5$, and the reactivation nonlinearity is initialized with a slope and a bias equal to this expected steady-state voltage.

\subsection{Optimizer: DeepST}

To allow functional synapse clusters to form, we introduced a connectivity rewiring algorithm (Fig. \ref{fig:methods}E) , Deep Synaptic Translocation (DeepST) (named after the original algorithm DEEP R for use with sparse ANNs \citep{bellec_deep_2018}). Dendrinet uses effective synaptic weights using an element-wise multiplication of $softplus(\text{pre-weight}) * \text{mask}$ to create sparse connectivity. The mask is held at fixed density for both E-to-E and I-to-E synapse weights. 

DeepST is used as follows:
\begin{enumerate}
    \item Calculate the loss
    \item Take a gradient step on the model parameters
    \item Prune: acting on the mask, drop any active synapse whose effective weight falls below a weight threshold (set to 1e-6 in for the analyses presented here) by setting the mask entry to zero.
    \item Replace: acting on the mask, replace pruned synapses by sampling new connections to restore the target active count.
\end{enumerate}

No branch is allowed to have less than one excitatory or inhibitory synapse. The rewiring is applied at each optimizer step. The number of active synapses is fixed at 140 synapses of each valence per neuron. However, the number of synapses per branch (fan-in) is not fixed and can be learned.

\subsection{Model Conditions}

\begin{table}[t!]
\centering
\renewcommand{\arraystretch}{1.5}
\begin{tabular}{|p{3.5cm}|p{3.5cm}|p{5.5cm}|}

\hline
Component & Active & Inactive \\
\hline
Shunting Inhibition & Divisive inhibition \newline (equation \ref{eq:vinf}) & Subtractive inhibition \newline $V_\infty = g_E - g_I + \sum_i g_{v,i}V_{out, i}$ \\
\hline
Poirazi Nonlinearity & Piece-wise \newline linear-to-tanh function \newline (equation \ref{eq:vout}) & Identity \newline $V_{out} = \text{Identity}(V_\infty)$ \\
\hline
Deep Synaptic \newline Translocation (DeepST) & Learned Synaptic \newline Connectivity & Frozen random synaptic connectivity \\
\hline
\end{tabular}
\caption{Model Components and Ablations}
\label{tab:model}
\end{table}

To determine whether model performance and emergence of functional synapse clusters depend on each biological component, we train and test models with each of the 3 biological components are active or inactive (Table 1). The three components are the shunting inhibition, the Poirazi nonlinearity, and the DeepST optimizer. When shunting inhibition is inactivated, these synaptic inputs are integrated linearly (i.e. $\sum  \text{excitation} - \sum \text{inhibition}$). When the Poirazi nonlinearity is inactivated, this nonlinearity is replaced with the identity function. When DeepST is ablated, the synapse locations are frozen in their initial randomly assigned location. Combinations of deactivations yields $2^3 = 8$ model conditions.

\subsection{Training}
    \textbf{Loss}: Loss was cross-entropy on the 4-class label. 
    \textbf{Optimizer}: Using Adam as the optimizer, different parameter groups had different learning rates: synapse lr = 0.1, Dendrinet branch weight lr = 0.001. Weight decay 1e-2 is applied to each group. Gradient clipping applied at 5.0. 
    \textbf{Schedule}: Up to 3,000 epochs on the train split, with early stopping at patience = 200 epochs on the validation loss. The model with the best validation checkpoint is evaluated. 
    \textbf{Batching and split}: 10,000 training samples per run, split 80/10/10 train/validation/test. The batch size was 256. 
    \textbf{Per seed data independence}: The model training seeds and dataset seeds were set equal per run, and we used 40 seeds per model condition yielding 320 trained models. \textbf{
    Evaluation}: Test accuracy on a 1,000 sample held-out set.
    \textbf{Hardware}: Nvidia A100 GPU cluster.

\subsection{Analyses}
\subsubsection{Performance Comparison}

The Dendrinet's accuracy on the PCC task is compared against three principled reference points: a linear classifier, dense multi-layer perceptrons (MLPs) at matched compute, and a "shallow" single layer vs "deep" doublet layer MLP pair to disambiguate compartmental gain from parameter gain. All baselines are evaluated on the same dataset realization used for training, with held-out test accuracy used for comparison. Linear discriminant analysis (LDA) operates as a linear lower bound, which determines whether the task is linearly separable \citep{hastie_elements_2009}. Quadratic discriminant analysis (QDA) is a second-order reference. It is Bayes-optimal for the Gaussian $z$ model, or after applying $\Phi^{-1}$ to $x$.

Two densely connected multi-layer perceptrons (dMLPs) act as parameter-size-matched and depth-matched controls. 1-layer dMLP has one hidden layer and depth of 2, and 2-layer dMLP has 2 hidden layers and depth of 3. For depth L, the trainable weight count is $P = d_{\text{in}}H + (L-1)H^2 + Hd_{\text{out}}$, where $d_{\text{in}}$ and $d_{\text{out}}$ are the input and output dimensions, and the hidden width H is set by a search to the smallest integer that meets the matching target. Biases are excluded from the parameter count. One sparsely connected MLP (sMLP) acts as a parameter-size-matched and sparsity-matched control. It has one sparsely connected hidden layer with connection density of 0.1 of the input dimension (20 weights per node), which is connected to another sparsely connected readout layer. The sparsity is produced using a dense-MLP scaffold with a fixed random binary mask applied multiplicatively in the forward pass. This mask is not learned. Hidden layer bias is kept dense, and readout has no bias. MLP Parameter count match is applied such that each MLP variant is matched 1-to-1 to a Dendrinet of the same compartment count (1, 3, 7, 15 compartments). For each pair of models, H is found by search to the smallest hidden width that satisfies $P_{\text{active}}(H) \geq P^{\text{Dendrinet}}_{\text{eff}}$ . This means MLPs receive slightly more parameters than their Dendrinet match. Results are summarized as both accuracy-vs-condition bar plots and as parameter-count-vs-accuracy scatter plots.

\subsubsection{Connectivity Structure Histograms}
The Dendrinet mask depicts the learned connectivity of the network, so to visualize connectivity we created histograms of both the input distribution (synapses per input channel) and the "fan-in" distribution (synapses per branch). This was determined for each of the 8 model conditions and each histogram was pooled over 40 seeds. The input distribution reveals whether the trained mask covers the input uniformly or concentrates on specific dimensions. The fan-in distribution reveals whether the synapses are distributed evenly at approximately 20 synapses per branch type (distal, proximal, somatic).

\subsubsection{Correlation of Excitation and Inhibition}
We determined the functional correlation of excitatory and inhibitory activity at each branch for each class context. For each branch in each trained model, we ran a test batch consisting of 100 samples $\times$ 4 classes through the network, then extracted the per-branch excitation $E_b$ and inhibition $I_b$ and computed the Pearson correlation $r(E_b, I_b)$ within each class. We then averaged across branches for each neuron and each branch layer, and aggregated the mean across the 40 seeds.

\subsubsection{Normalized Entropy Clustering Metric}
The functional synapse clustering metric is a per-branch measure of how much a branch's input synapses concentrate onto the task's covariance blocks. It is corrected for fan-in by a fan-in–matched random null. The metric reduces to one scalar per (branch, class, valence) and is directly comparable across branches, layers, and conditions.
\textbf{Setup}: The PCC task partitions the 200-dimensional input into 20 blocks of 10 dimensions under a class-specific permutation. For each class $c$, every input dimension $i$ carries a block label $g_c(i) \in \{1, \dots, 20\}$. The trained branch is characterized by its synaptic weights, so the metric summarizes where the branch's input \emph{weight mass} lands relative to class $c$'s blocks.
\textbf{Observed entropy}: For branch $b$ under class $c$, let $\mu_{b,c,g}$ be the total weight of the branch's synapses falling in block $g$ (and $0$ if none do):
\[
\mu_{b,c,g} = \sum_{i\in b:\,g_c(i)=g} w_i.
\]
These block-total weights are normalized into a distribution,
\begin{equation}
p_{b,c,g} = \frac{\mu_{b,c,g}}{\sum_{g'} \mu_{b,c,g'}},
\end{equation}
from which the observed Shannon entropy per branch and class is
\begin{equation}
H_{\text{obs},\,b,c} = -\sum_{g} p_{b,c,g}\,\log\!\big(p_{b,c,g} + \varepsilon\big),
\end{equation}
with $\varepsilon$ a small constant for numerical stability. Low $H_{\text{obs}}$ means the branch concentrates its weight on a few class-$c$ blocks; high $H_{\text{obs}}$ means it spreads weight mass across many class-$c$ blocks.
\textbf{Null entropy}: The null entropy $H_{\text{eff}}(\kappa_b, c)$, where $\kappa_b$ is the branch fan-in, is computed by Monte Carlo sampling. For each of $M = 2{,}000$ draws, $\kappa_b$ input indices are sampled uniformly without replacement from $\{0, \dots, 199\}$; the branch's observed weight values are reassigned to these locations; the block-total weights $\mu_g^{(s)}$ are recomputed under class $c$'s partition; and the same Shannon entropy formula is evaluated:
\begin{equation}
H_{\text{eff},\,b,c} = \frac{1}{M}\sum_{s=1}^{M}\left[-\sum_{g} p_{g}^{(s)}\,\log\!\big(p_{g}^{(s)} + \varepsilon\big)\right],
\qquad
p_{g}^{(s)} = \frac{\mu_g^{(s)}}{\sum_{g'}\mu_{g'}^{(s)}}.
\end{equation}
Because the draw keeps the branch's actual weight values but randomizes their placement, $H_{\text{eff}}$ is computed per branch. This null represents the case where no learning has occurred and only the fan-in is matched. The Monte Carlo null is required to prevent a fan-in confound: a branch with few synapses can only occupy a few blocks and therefore appears artificially clustered (low $H_{\text{obs}}$) regardless of learning.
\textbf{Normalized entropy}: Since observed entropy is confounded by branch fan-in, the final metric normalizes against the expected entropy under fan-in–matched random synapse placement:
\begin{equation}
H_{\text{norm},\,b,c} = \frac{H_{\text{obs},\,b,c}}{H_{\text{eff},\,b,c}}.
\end{equation}
$H_{\text{norm}} < 1$ indicates the branch is more clustered than the fan-in–matched random null; $H_{\text{norm}} = 1$ matches the null; and $H_{\text{norm}} > 1$ indicates the branch is more dispersed than the null. This is reported per condition, seed, valence, branch layer, branch, and class, with means taken across seeds.

\subsubsection{Synapse Shuffle Ablation}
The ablation test examines whether the trained branch organization is important for accuracy by applying three post-hoc shuffles to fully trained models and measuring the accuracy drop. All shuffles preserve the network's parameter count. Each synapse carries two properties: weight and connectivity mask. Each shuffle is defined by which subset it preserves vs shuffles. Each trained model is deep-copied, the shuffle is applied five times to excitatory, inhibitory, or both valences, and test accuracy is measured.

\begin{figure}[t!]
    \centering
    \includegraphics[width=1.0\linewidth]{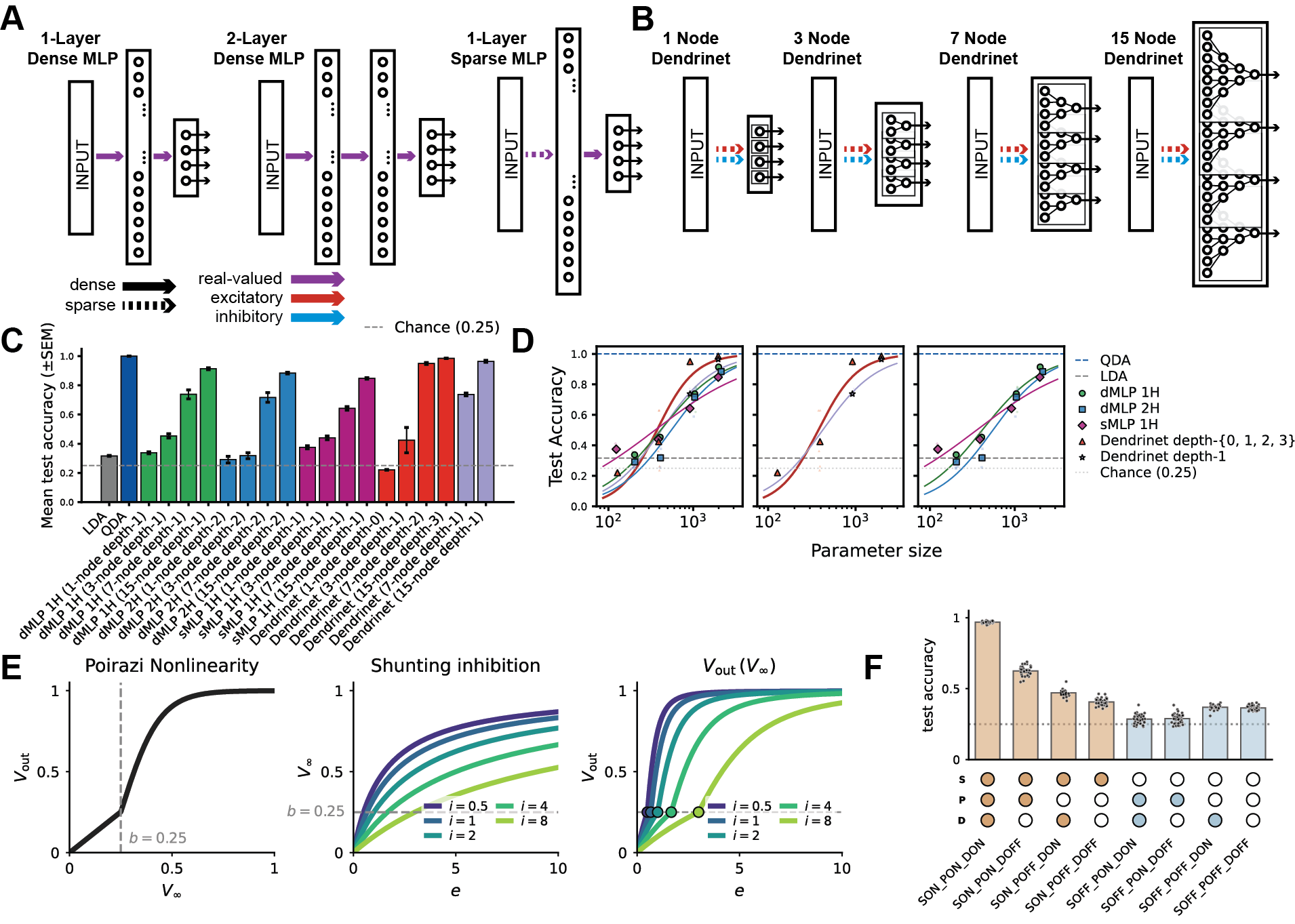}
    \caption{\textbf{Performance and parameter efficiency in Dendrinet compared to simple MLP}. 
     a. Multilayer perceptrons (MLPs) tested on PCC, including a 1-hidden layer and 2-hidden layer dense MLP and 1-hidden layer sparse MLP. All MLPs are parameter-size matched to the Dendrinet models. b. Dendrinet models with binary tree structure with increasing depth. 1 node Dendrinet is similar to a zero-hidden layer sparse MLP. Each Dendrinet has separate non-negative excitatory and inhibitory parameters, the total of which is used for parameter-size matching with MLPs. c. Model performance on PCC task for LDA, QDA, dense MLPs, sparse MLPs, and Dendrinet models. Flat Dendrinet models conserve the number of nodes while reducing depth to that of a 3 node Dendrinet model. Dendrinet (7 node) has higher performance than all other 7 node conditions. d. Model performance on PCC task, visualized for parameter scaling. The flat Dendrinet model performs similarly to the dense MLPs, whereas the tree Dendrinet with full depth outperforms all other models before performance saturates at high parameter counts. e. Dendritic nonlinearities are composed to yield an activation function conditioned on excitatory and inhibitory activity. The dendritic Poirazi nonlinearity, a biologically inspired proxy for NMDA-receptor dependent  transformation of synaptic drive, is a piece-wise linear-to-tanh amplifying function. Shunting inhibition is modeled after steady-state voltage due to open conductances with different reversal potentials. With non-negative excitation and inhibition, excitatory drive is attenuated by inhibitory divisive normalization. Each dendritic branch composes the shunting inhibition within the Poirazi nonlinearity, creating an activation function sensitive to inhibition. f. Performance of Dendrinet on the PCC task with shunting inhibition, Poirazi nonlinearity, and DeepST rule turned on or off. All three properties are necessary for performance of the PCC task above chance and above linear classifier performance.}
    \label{fig:benchmark}
\end{figure}

\section{Results}

\subsection{Dendrinet models with biological components outperform simple Multi-Layer Perceptrons in covariance task}

We benchmarked the performance of Dendrinet on the PCC task against other models. We use dense and sparse multi-layer perceptrons (MLPs) that are parameter-size matched to different sizes of Dendrinets for comparison (Fig. \ref{fig:benchmark}AB), as well as linear discriminant analysis (LDA) and quadratic discriminant analysis (QDA). As expected for this inherently nonlinear task, LDA performs very poorly ($\sim 35\%$), likely due to the lack of mean separation in the data (Fig. \ref{fig:benchmark}C). Because the task's discriminatory information lies in its across-channel second-order statistics, QDA solves it with 100\% accuracy (Fig. \ref{fig:benchmark}C). The PCC task is therefore nontrivial but solvable by second-order computation.

Dendrinet can solve this nontrivial task. We compared Dendrinet against depth- and parameter-matched MLPs, bounded below by LDA and above by QDA (Figure \ref{fig:benchmark}C). Eliminating the dendritic morphology (Dendrinet point readout neuron, the "1 Node Depth-0 Dendrinet") reduces performance relative to the full model but outperforms linear models with both more parameters and depth (Fig. \ref{fig:benchmark}CD). Dendrinet performance approaches Bayes-optimal performance at high depth and parameter counts. With reduced depth (Dendrinet depth-1), Dendrinet performance drops to that of the parameter-matched dense MLPs. Dendrinet's internal depth improves parameter efficiency before performance saturates, allowing it to outperform all other simple models at matched parameter counts.

It is unclear which of the differentiating properties of Dendrinet explain its parameter-efficient performance. Therefore, we ablated the differentiating properties of Dendrinet to determine their necessity for task performance: we replaced the Poirazi nonlinearity with an identity function, the divisive operation of the shunting inhibition with a subtractive operation, and the structural plasticity rule DeepST with fixed sparse random connectivity (Fig. \ref{fig:benchmark}F). Using each ablation individually or in combination, turning off any one of the properties drops performance closer to chance. Therefore, all three biological properties are necessary for the Dendrinet model to perform the task.

\begin{figure}[t!]
    \centering
    \includegraphics[width=1.0\linewidth]{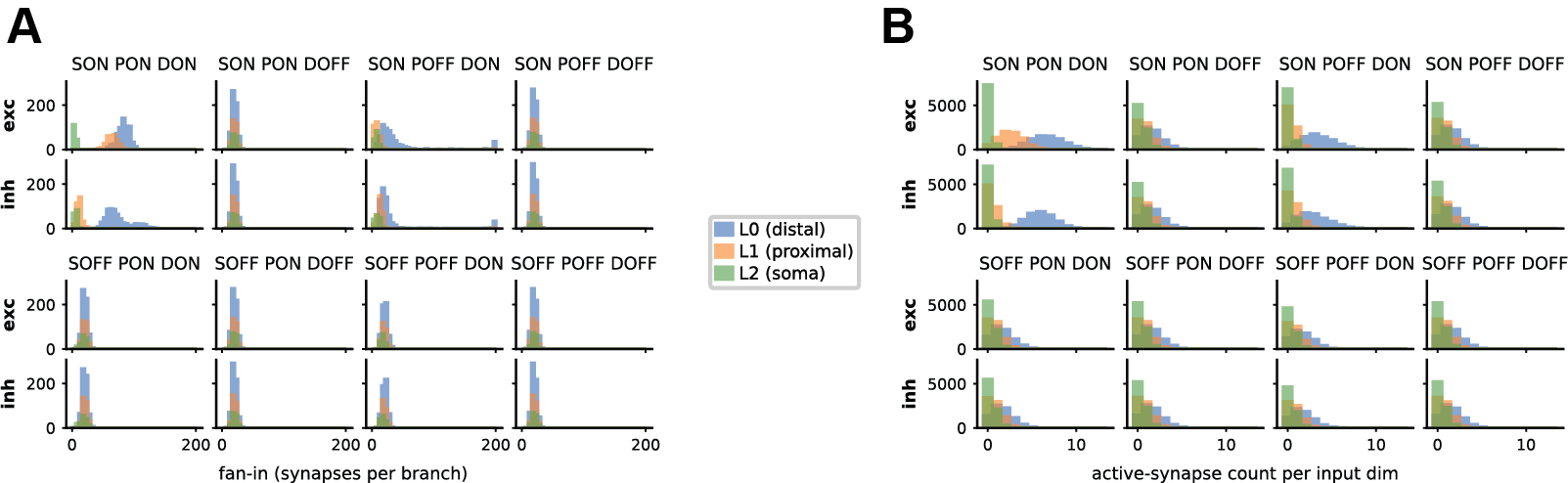}
    \caption{\textbf{Dendrinet with all biological properties active exhibits learned connectivity after task optimization} 
    a. Distribution of synapses per branch, grouped by morphological branch type. Excitatory and inhibitory organization of synaptic branch targeting sees the most change when all three properties (shunting inhibition, Poirazi nonlinearity, and DeepST) are active. Distal branches have the most excitatory and inhibitory synapses. Somatic branches have reduced synapse number. Proximal branches have more excitatory synapses than inhibitory synapses. b. Visualizing synaptic repetition through distribution of synapses per input channel, grouped by morphological branch type. When shunting inhibition and DeepST rewiring are active, specific input channels routed through both excitatory and inhibitory synapses are repeated many more times among distal branches. When the Poirazi nonlinearity is active, excitatory synapses that target proximal branches also see an increase in repeated inputs. Blue is distal, orange is proximal, and green is somatic. N=40.
    }
    \label{fig:connectivity}
\end{figure}

\subsection{Dendrinet learns connectivity structure depending on biological components necessary for task performance}

We examined if the DeepST algorithm leads to functional clustering of its synapses on the dendritic trees via structural plasticity. We can visualize connectivity by grouping neuron branch types by position in the neuron's morphology (distal, proximal, and somatic) and plotting a histogram of how many synapses each branch type has (Fig. \ref{fig:connectivity}A). In cases where DeepST was inactive, the means for each branch type are the same, reflecting the fixed random initial connectivity of the model. With DeepST active but shunting inactive the means changed slightly compared to when shunting was active. In cases where DeepST, shunting, and the Poirazi nonlinearities are all active, the distal branches gained more synapses and the somatic branches lost synapses. This occurred for both excitatory and inhibitory synapses. When the Poirazi nonlinearity is inactive, we observe the same pattern but to a much lesser extent. Similar patterns are seen when visualizing learned synaptic repetition in Figure \ref{fig:connectivity}B. This learned organization arises due to structural plasticity mechanisms in conjunction with shunting inhibition, and to a particular extent with the Poirazi nonlinearity. This demonstrates that connectivity is learned in a way that changes the number of synapses per branch. 

\subsection{Functional Synapse Clusters only form with biological components necessary for task performance}

As seen in Figure \ref{fig:benchmark}C, QDA achieves 100\% performance on the PCC task. This suggests that QDA is a useful reference for Dendrinet computation. Below is a theorem for the linearized expected SNR model based inspired by QDA.

\begin{restatable}
[Synaptic Clustering]{theorem}{synapticclustering}\label{thm:clustering_normative}

Let $0\leq\rho_g<\rho_l<1$. Let $W$ have exactly $s$ active binary synapses, with $2\leq s\leq n$, and let competing classes be independent uniform block permutations. Define the expected-competitor SNR by
\[
\bar\sigma(W,k)=\frac{W^\top\Sigma_kW}{\mathbb E_{k'}[W^\top\Sigma_{k'}W]}.
\]
Putting all $s$ synapses in one class-$k$ block gives a larger $\bar\sigma$ than putting $s-1$ there and one in another block. The difference is
\[
\Delta\bar\sigma
=\frac{2(s-1)(\rho_l-\rho_g)}{D_s}>0,
\qquad
D_s=(1-\rho_l)s+\rho_gs^2
+(\rho_l-\rho_g)\left[s+s(s-1)\frac{n-1}{N-1}\right].
\]
For fixed $s\leq n$, this expected SNR is maximal when all $s$ synapses target one block. The normalized gap is positive, but it does not always grow with $s$.
\end{restatable}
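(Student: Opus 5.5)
The plan is to reduce everything to a count of same-block synapse pairs. I would make explicit the covariance model implicit in the statement. In class $k$, $\Sigma_k$ has unit diagonal and entry $\rho_l$ between two distinct channels in the same class-$k$ block of size $n$. It has entry $\rho_g$ between channels in different blocks, out of $N$ channels in total.

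\textbf{Step 1: the quadratic form.} For a binary $W$ with $s$ active entries, let $m_g$ be the number of synapses in class-$k$ block $g$, so $\sum_g m_g=s$. Expanding $W^\top\Sigma_kW$ over ordered pairs of active entries, I expect
\[
W^\top\Sigma_kW = s+\rho_g s(s-1)+(\rho_l-\rho_g)P_k(W),\qquad P_k(W)=\sum_g m_g(m_g-1).
\]
Here $P_k(W)$ is the number of ordered pairs of distinct active synapses lying in a common block. This follows from the $s$ diagonal terms plus $s(s-1)$ off-diagonal terms, each counted at $\rho_g$ and upgraded by $\rho_l-\rho_g$ when the two synapses share a block.

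\textbf{Step 2: the denominator is configuration-independent.} Under an independent uniform block permutation $k'$, any fixed ordered pair of distinct channels lands in a common block with probability $(n-1)/(N-1)$. By linearity of expectation,
\[
\mathbb E_{k'}[P_{k'}(W)]=s(s-1)\tfrac{n-1}{N-1}.
\]
Hence
\[
\mathbb E_{k'}[W^\top\Sigma_{k'}W]=s+\rho_gs(s-1)+(\rho_l-\rho_g)s(s-1)\tfrac{n-1}{N-1}.
\]
A short rearrangement shows this equals $D_s$, since $(1-\rho_l)s+(\rho_l-\rho_g)s-\rho_gs+\rho_gs^2=s+\rho_gs(s-1)$. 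The key point is that $D_s$ depends only on $s$, not on where the synapses sit. Therefore $\bar\sigma(W,k)$ is an increasing affine function of $P_k(W)$ with positive slope $(\rho_l-\rho_g)/D_s$.

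\textbf{Step 3: the gap and the maximality claim.} The all-in-one-block configuration has $P=s(s-1)$. The $(s-1,1)$ split has $P=(s-1)(s-2)$. Their difference is $2(s-1)$, which gives the stated $\Delta\bar\sigma=2(s-1)(\rho_l-\rho_g)/D_s$. This is positive because $s\ge2$, $\rho_l>\rho_g$, and $D_s>0$; every term of $D_s$ is nonnegative and $(1-\rho_l)s>0$.

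For maximality, convexity of $m\mapsto m(m-1)$ gives $\sum_g m_g(m_g-1)\le\sum_g m_g(s-1)=s(s-1)$. Equality holds iff a single $m_g=s$, which is feasible precisely because $s\le n$.

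\textbf{Step 4: non-monotonicity, the delicate part.} I read the ``normalized gap'' as $\Delta\bar\sigma$ itself, i.e.\ the gap measured after division by $D_s$. The numerator is linear in $s$, while $D_s$ is quadratic in $s$ with leading coefficient $\rho_g+(\rho_l-\rho_g)\frac{n-1}{N-1}$. This coefficient is strictly positive whenever $\rho_g>0$ or $n>1$. Hence $\Delta\bar\sigma\to0$ like $1/s$, so it cannot be increasing in $s$ indefinitely.

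The step I expect to need the most care is turning ``does not always grow'' into a clean witness. I would first try to exhibit explicit parameters, e.g.\ $\rho_g=0.3$, $\rho_l=0.9$, $n=10$, $N=200$, and check $\Delta\bar\sigma$ at two consecutive values of $s\le n$. If that fails, I would compute the sign of $\frac{d}{ds}\frac{s-1}{D_s}$, which is $D_s-(s-1)D_s'$. I expect it to turn negative once $s$ exceeds roughly $1+\sqrt{(D_1\text{-type constant})/\text{leading coefficient}}$, and I would pick a regime where that threshold lies below $n$.
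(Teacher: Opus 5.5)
Your proposal is correct and follows the paper's proof essentially step for step. The paper also reduces $W^\top\Sigma_kW$ to $(1-\rho_l)s+\rho_gs^2+(\rho_l-\rho_g)\sum_bc_b^2$, which is your $P_k(W)$ plus $s$; it uses the same $(n-1)/(N-1)$ pair-collision probability to show the expected-competitor denominator is $D_s$ independently of placement, and it compares $s^2$ with $(s-1)^2+1$. The one difference is your Step 4, which the paper only asserts. Your witness does work: with $c=\rho_g+(\rho_l-\rho_g)\frac{n-1}{N-1}$ one has $D_s=s\,(1+c(s-1))$, so the gap changes with sign $1-c(s-1)^2$. For $\rho_g=0.3,\rho_l=0.9,n=10,N=200$ it falls from about $0.484$ at $s=3$ to about $0.454$ at $s=4$. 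The $1/s$ asymptotics alone would not have sufficed, because $s\le n$.
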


This means FSC-like clustering is theoretically beneficial in this simplified model and performance is sensitive to learned synaptic organization. The proof can be found in the Appendix.

Therefore, we hypothesized that the biology-inspired features of Dendrinet, along with the non-linear task demands, lead to FSCs. We defined FSCs as synapses with correlated activity that are located on the same dendritic branch. Exploiting the correlation structure of the task, correlated synapses are those that belong to a "block" or "assembly" of correlated input channels \citep{yuste_neuronal_2024}. To quantify the randomness in the distribution of synapses, entropy is calculated from the block distribution within each branch for each class context and normalized by an estimated entropy using Monte Carlo sampling given the number of synapses that exist on the branch. 

Based on this normalized entropy metric, we observed functional synapse clustering for both distal excitatory and inhibitory synapses in the ALL-ON condition (Fig. \ref{fig:clusters}A). Excitatory synapses were clustered in proximal branches as well. In contrast, excitatory synapses from non-preferred classes in distal dendritic compartments appeared to be "anti-clustered" or dispersed.

Connectivity learning may also correspond to FSC formation. As we've seen in Figure \ref{fig:connectivity}A for SON PON DON (all active), synapses are rewired away from the soma to the proximal and distal dendrites, depending on synapse valence. This would have an impact on clustering given the number of inputs in a block is smaller than what is connected onto a branch. However, what we see is excitatory and inhibitory clustering in the distal branches, and excitatory clustering in the proximal branches (Fig. \ref{fig:clusters}A. Comparing the connectivity result and the clustering result, we can see that the branches with more synapses also exhibit FSCs.

Ablation of any of the biologically-inspired components affects synapse clustering. In the DeepST-OFF condition (Fig. \ref{fig:clusters}B), distal excitatory clusters disappear, but proximal class-selective clusters remain. Somatic nodes show a reduction in normalized entropy (Fig. \ref{fig:clusters}B); however, it is not selective for class. In the Poirazi-OFF condition (Fig. \ref{fig:clusters}C), distal excitatory synapses show non-preferred clustering and preferred dispersal.  For both DeepST OFF and Poirazi-OFF conditions (which is Shunting ON for both), inhibitory clustering continues to be class selective, but now also shows dispersal for non-preferred classes. With Poirazi-OFF, inhibitory synapses show preferred clustering and non-preferred dispersal for every branch type (distal, proximal, somatic). For the Shunting OFF condition (Fig. \ref{fig:clusters}D), non-class-selective distal and proximal dispersal can be seen. Excitatory non-preferred class clustering and inhibitory non-selective class clustering may also be observed. Given that these modes see reduced performance, this may be due to an inability to learn the task or have the computational capacity to generalize to the test set. Since excitatory clustering was dispersed in each ablation condition, this suggests that excitatory clustering supports high performance.

\begin{figure}[t!]
\includegraphics[width=1.0\linewidth]{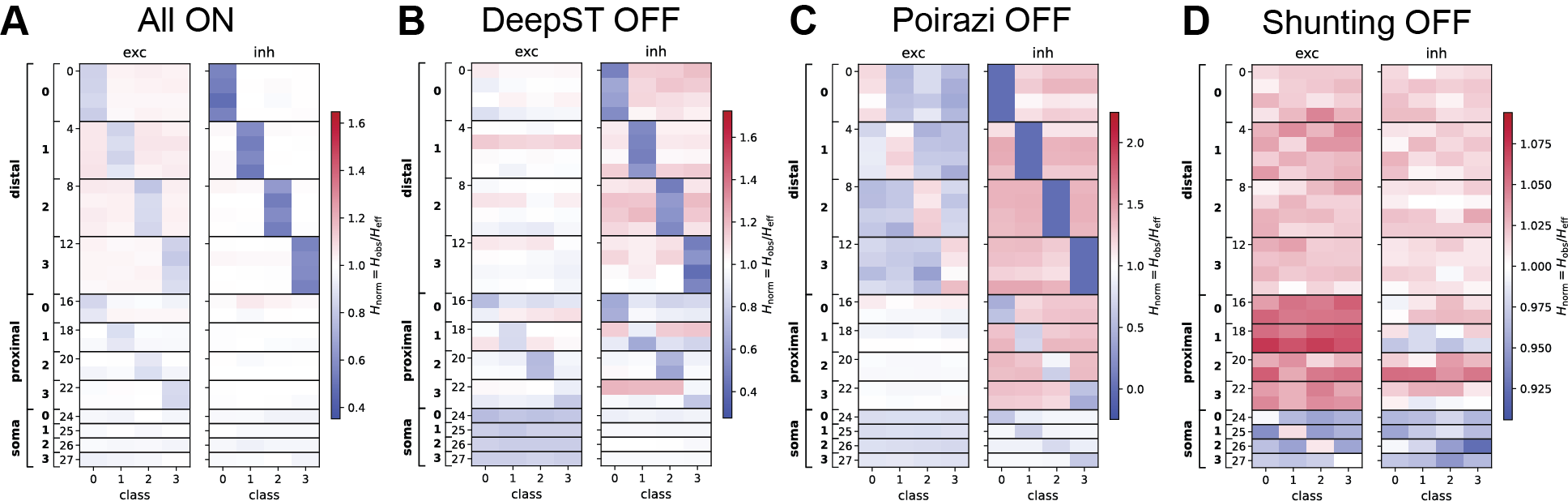}
\caption{\textbf{Normalized entropy as a metric of functional synapse clustering reveals different excitatory clustering signatures. } 
Normalized entropy yields a task-relevant organization of synaptic weights into functional synapse clusters for both distal and proximal excitatory synapses and distal inhibitory synapses. Normalized entropy is derived from block correlation connectivity distributions and is normalized by expected entropy calculated from a distribution generated via Monte Carlo sampling. Entropy less than one indicates clustering more than chance. a. With all toggled properties active, clear task-relevant excitatory and inhibitory clustering appear for the preferred class contexts. b. With DeepST deactivated, random connectivity shows distal excitatory clusters disappear, but distal inhibitory synapses remain. c. With the Poirazi Nonlinearity deactivated, excitatory clusters occur for non-preferred class contexts and "anti-cluster" or disperse for preferred class contexts. Inhibitory clustering is maximized as well. d. With shunting inhibition deactivated, only somatic, non-preferred clustering occurs among excitatory synapses and indiscriminate inhibitory clustering occurs as well. N=40. }\label{fig:clusters}
\end{figure}

%%% New section?

\subsection{Dendrinet exhibits branch activations advantageous for performance}

To interpret what the model learned through training, we visualized the branch activations, similar to examining the voltages in all sections of the dendritic trees, for each of the four readout neurons. In the ALL-ON case (Fig. \ref{fig:activity}A), distal branches receive most synaptic activation for both excitatory and inhibitory valences, with no class-specific differences in activity. We calculated the sample-by-sample correlation of excitation and inhibition data points for each branch and each class, which revealed class-specific correlation signals in the distal branches of each readout neuron. This correlation is not uniform across classes. For non-preferred classes, inhibition tracks excitation closely, so shunting inhibition cancels the excitatory drive and the branch output stays low - this is superficially similar to what is expected from concepts of excitation-inhibition balance in experimental neuroscience \citep{froemke_plasticity_2015}. For the preferred class, the correlation is weaker, leaving excitation partly un-cancelled. The residual preferred-class drive is what survives and amplified downstream (Fig. \ref{fig:activity}A). Selectivity therefore arises not from inhibition matching excitation everywhere, but from it matching most tightly for the classes that should be suppressed. Shunting inhibition and the Poirazi nonlinearity function outputs for each branch therefore cleanly receive excitation for preferred classes, and together amplify class-selective excitation.

\begin{figure}[t!]
    \centering
    \includegraphics[width=1.0\linewidth]{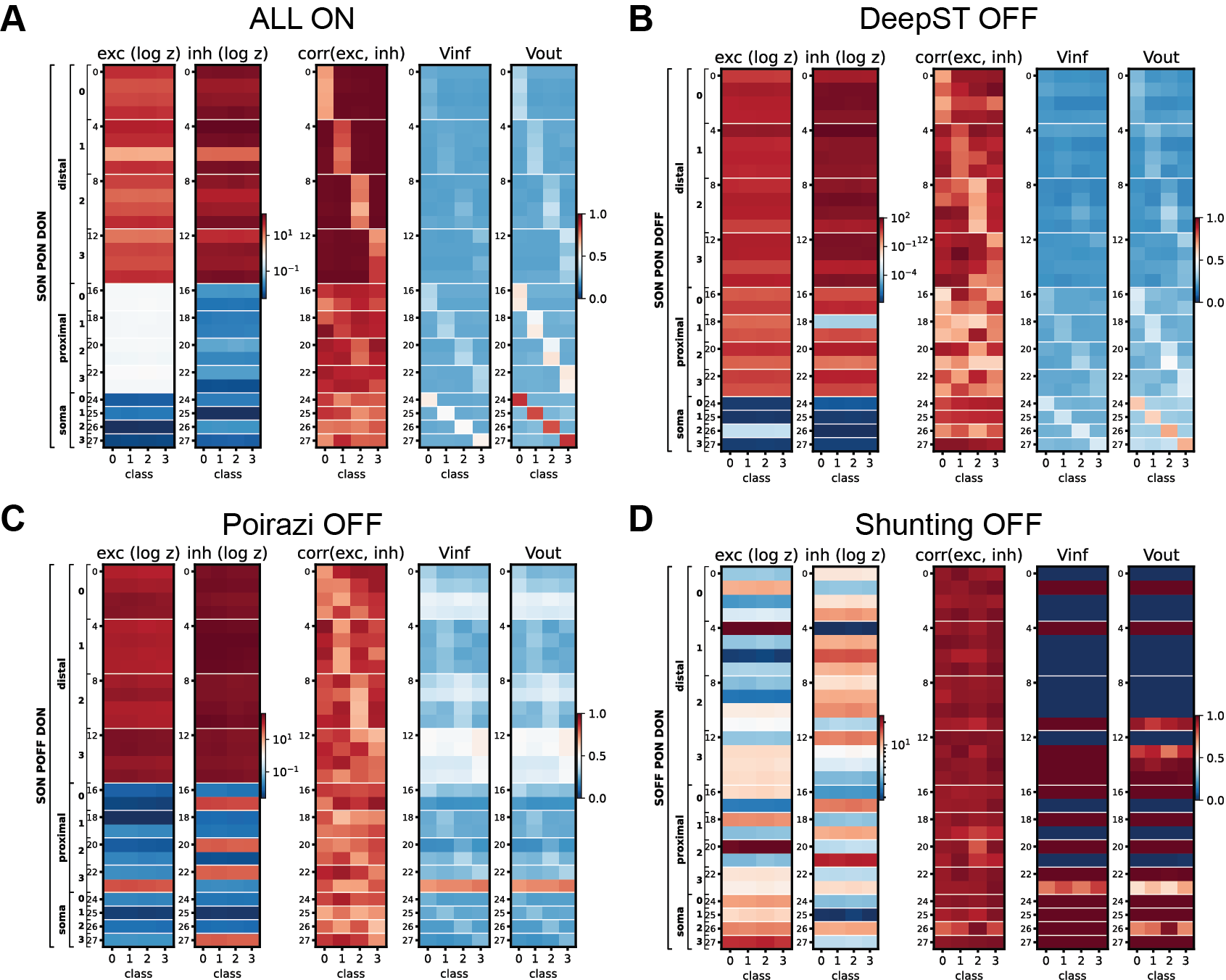}
    \caption{\textbf{Dendrinet branch-targeted synapse correlation, shunting inhibition activation, and branch output.} 
     a. Dendrinet branch activity with all three selectable biological properties active. Encoding of class specific information is mediated by excitatory and inhibitory correlation by class. Excitatory and inhibitory synaptic activation encode class-discriminatory information through high correlation of non-preferred classes and decorrelation of preferred classes in distal branches. This class-selective signal appears in distal $V_\infty$ activation, which is then amplified by the Poirazi nonlinearity that then results in accurate class-specific readout. b-d. Branch activations with one of each of the biological properties deactivated. b. DeepST OFF (random connectivity) has higher correlation in non-preferred classes, resulting in a decodable output signal at soma. c. Poirazi OFF (identity activation function) also shows slightly higher correlation in non-preferred classes, but to a much lesser degree than when the Poirazi nonlinearity is active. d. When shunting nonlinearity is OFF, inhibitory and excitatory synapses are fully correlated and no clear class-dependent signals are present. N=40.
    }
    \label{fig:activity}
\end{figure}

% Following proofs in the appendix, we also find it is optimal that the inhibitory synapses and excitatory synapses target mutually exclusive input blocks. 

% \begin{theorem}[Excitatory-Inhibitory Diversity]\label{thm:shunting_ei}
% The signal to noise ratio of a shunting neuron is maximized when the excitatory and inhibitory weights target diverse blocks of the class specific covariance matrix.
% \end{theorem}

% This means that this disjoint targeting allows for 

We hypothesized that the Poirazi nonlinearity is necessary because it converts the learned matching between excitation and inhibition into class-selective amplifications. This predicts that shunting inhibition should produce a class-dependent variance separation at the distal branches with smaller output variance for non-preferred classes than for preferred ones. We find this separation in the All-ON, Poirazi-ON case (Fig. \ref{fig:poirazi}A). This also predicts that this variance separation should align with the amplifying domains of the nonlinearity. Overlaying the shunting inhibition output onto the Poirazi nonlinearity confirms that non-preferred-class output falls into the linear-subthreshold domain, while part of the preferred-class output reaches the amplifying suprathreshold domain. These shunting inhibition outputs are then amplified by the Poirazi nonlinearity, introducing a mean shift that propagates through the network, resulting in clear class-selectivity. In the Poirazi-OFF case, the learned perfect correlation between distal excitation and inhibition disappears, and the separation between the subthreshold non-preferred classes and suprathreshold preferred class disappears, resulting in low class-selectivity. This demonstrates how the Poirazi nonlinearity acts as a clear mechanism for learned performance of the task.

\begin{figure}[t!]
    \centering
    \includegraphics[width=1.0\linewidth]{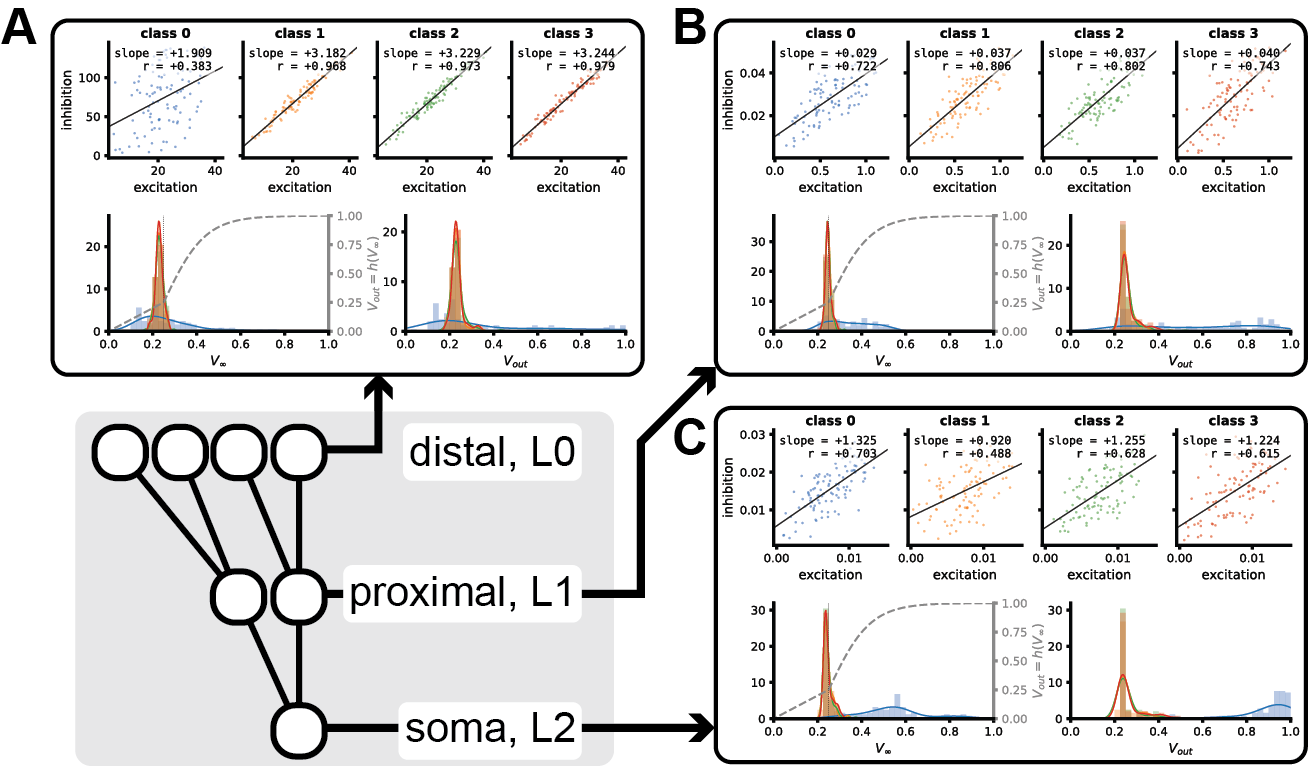}
    \caption{\textbf{Correlation between excitation and inhibition precisely allows Poirazi nonlinearity to amplify class-selective signals. } 
     a. Variance of the distribution of shunting inhibition output in the Poirazi-ON case in the distal branches are small for non-preferred classes and larger for preferred classes. When these distributions are overlaid with the Poirazi nonlinearity, the shunting inhibition output for the non-preferred classes remains in the linear-subthreshold domain. Part of the output for the preferred class lands within the amplifying supra-threshold domain. These shunting inhibition outputs are then amplified by the Poirazi nonlinearity, introducing a mean shift. Blue is distributions of activity from class 0 samples. Orange, green and red are distributions from classes 1, 2, and 3. They overlap completely to form brown. b-c. This mean shift propagates through the network, resulting in class-selectivity. The mean shift is greater in magnitude than inhibitory synaptic inputs in both proximal and somatic branch nodes, allowing excitatory signal from distal branches to be amplified at both layers.
    }
    \label{fig:poirazi}
\end{figure}

\subsection{Ablation of Functional Synapse Clusters via Shuffling Reduces task performance}

Given that FSCs emerge from optimization of Dendrinet to the PCC task, they may be essential for good performance of the trained networks. If FSCs support task performance, then shuffling learned connectivity would significantly reduce model performance. Each synapse of an FSC consists of a synaptic weight, $\hat{w}$, and a connectivity mask, $m$. An active synaptic weight is $w = m \times \hat{w}$.

We found that shuffle ablation significantly reduces performance (Fig. \ref{fig:shuffle}). Shuffling both excitation and inhibition drops performance significantly. Shuffling excitatory weights on a model with learned connectivity leads to a smaller performance drop than shuffling the excitatory connectivity mask. Shuffling the excitatory connectivity mask leads to performance similar to that of LDA. This shows that performance is sensitive to learned excitatory connectivity, but it does not separate the FSC part from the rest of connectivity. In addition, shuffling the inhibitory weights on the model drops the performance further than that of the excitatory weights. Shuffling inhibitory connectivity drops the performance even further to chance level. These support that the learned connectivity across the dendritic branches within each neuron is critical for performance of the PCC task, though performance is more sensitive to shuffling inhibitory synapse properties than excitatory ones.

\begin{figure}[t!]
    \centering
    \includegraphics[width=1.0\linewidth]{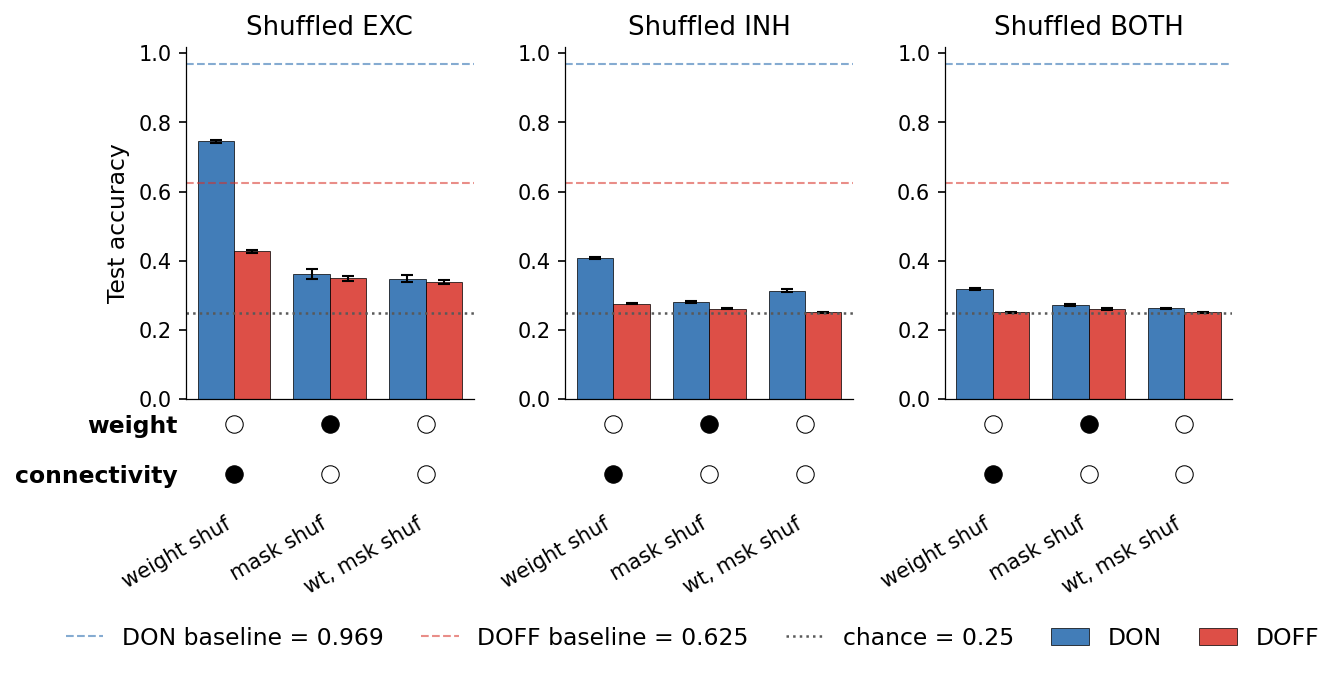}
    \caption{\textbf{Shuffle ablation of excitatory and inhibitory synapse clusters significantly reduces performance.}
    Shuffling synaptic components (weight and connectivity) disrupts task performance across the dendritic network. Excitatory and inhibitory synapses are shuffled alone, then both together. Shuffling excitatory or inhibitory connectivity distributions maximally reduces performance. Learned weight distributions also significantly reduce performance when shuffled, but to a lesser extent. Shuffling inhibitory synapses reduces performance to a greater extent than excitatory synapse shuffling. "DON" is DeepST active. "DOFF" is DeepST inactive. N = 40.}
    \label{fig:shuffle}
\end{figure}

\section{Discussion}

We examined if and how a dendrite-based model with conductance-based synapses can solve a covariance classification task and if task optimization induces FSCs. We find that task optimization produces excitatory and inhibitory FSCs on the distal branches of each readout neuron, and that fromation of these clusters depended jointly on the Poirazi nonlinearity, shunting inhibition, and structural plasticity. Furthermore, we observed correlated excitatory and inhibitory synaptic activity in distal branches in non-preferred classes, constraining the variance of the shunting inhibition function activity to remain in the linear-subthreshold domain of the Poirazi nonlinearity. In contrast, in the preferred-class context, reduced correlation of excitatory and inhibitory activity introduced variance in the shunting nonlinearity function, allowing for class-selective signals to cross a threshold and be amplified by the Poirazi nonlinearity. Directly ablate FSC by shuffling connectivity decreased performance, with performance returning to that of the single compartment baseline in the case of excitatory shuffling or chance in the case of inhibitory shuffling.

\subsection*{Dendritic computation and the PCC task}
A recurring goal in dendritic modeling is to determine if a neuron with active dendrites is more powerful computationally than a point neuron. Biophysically detailed models reproduce perceptron learning \citep{moldwin_perceptron_2020} and solve nonlinear classification that a linear unit cannot \citep{gidon_dendritic_2020, jones_might_2021, jones_biological_2022, aizenbud_what_2026, beniaguev_single_2021}. Our methods confirm that nonlinar dendritic compartmnets increase computational capacity: a single-compartment version of Dendrinet performs at the level of linear discriminant analysis (i.e., near chance) on the PCC task.  The PCC task, with its reliance on channel covariances, therefore cleanly separates the performance of models with and without dendrites. Where we depart from most of this work is in the task itself. Rather than a benchmark, we designed a task whose only discriminative signal lives in the permuted block structure of the input covariance, such that any above-chance performance must to come from second-order computation. \citet{liu_dendritic_2024} show that a single quadratic neuron can realize the Bayes-optimal classifier for Gaussian class-conditional distribution, which sets a ceiling, and \citet{ujfalussy_global_2018} show that sigmoidal branch integration is near-optimal for correlated inputs, which explains why a clustered, nonlinear dendrite is well-suited. The contribution here is not a new upper bound on dendritic capacity but an account of how a biologically grounded neuron can reach it.

\subsection*{Ablation of NMDAR-dependent nonlinearities corroborate experimental findings}

We demonstrated FSCs support computations underlying second order structure classification, and in the process demonstrated that the Poirazi nonlinearity was necessary for good performance on the task (Fig. \ref{fig:benchmark}F). The Poirazi nonlinearity, directly taken from theoretical work \citep{poirazi_pyramidal_2003} has been experimentally validated \citep{polsky_computational_2004}, and is a proxy for the effects of NMDAR-dependent nonlinearities.  Therefore, the model conditions with the Poirazi nonlinearity inactive may mirror experimental analysis of dendritic computation with NMDARs blocked  \citep{sehgal_compartmentalized_2025, kleindienst_activity-dependent_2011}. Inactivating the Poirazi nonlinearity in Dendrinet prevented the formation of FSCs (Fig. \ref{fig:clusters}C) and, conversely, induced dispersal in the preferred-class-context. In addition, we find that inhibitory FSC are formed, which has not been examined experimentally.  Whether the inhibitory synapse FSCs or the induced dispersal of excitatory synapses occur biologicaly and have experimentally testable functions are unknown.

\subsection*{Learned synaptic placement on dendritic trees and implications for connectomics}
Dendrinet learns where to place synapses, not only how strongly they connect. The structural plasticity rule, DeepST, is adapted from deep rewiring, DEEP R \citep{bellec_deep_2018}, and moves synapses between branches under a fixed connectivity budget while gradient descent sets the weights. This separates Dendrinet from other dendritic models, which fix connectivity and learn weights alone \citep{jones_might_2021, jones_biological_2022, chavlis_dendrites_2025}. Our results show that the nonlinearities increased performance depending on whether the weights were fixed or plastic. This suggests that further work testing whether nonlinearities are important for computation should also consider learned connectivity in addition to learned weights.

However, the synaptic weights and locations found in all of this work do not use biologically plausible learning rules, and instead rely entirely on backpropagation.  This was by design as our goal was not to mimic the learning process in the brain but rather to ask if task-optimization, by any means, yields synaptic architectures and distributions that mimic those found in the brain.  A challenge for future work is to demonstrate task optimization using bioplausible rules. Such work would not only demonstrate a synaptic organization and weight distribution is possible for task performance, but that such organization is bio-plausibly learnable.

Learning connectivity also has implications for analysis in EM connectomics. Most connectomic analysis treats a synapse as an edge between two neurons and discards where on the dendritic tree that synapse lands \citep{bazinet_towards_2023}. This is a safe simplification only if the dendritic location has no functional impact. Our ablation results argue otherwise, because shuffling the dendritic position of both excitatory and inhibitory synapses collapses performance.  Now, the data needed to model and test placement-dependent theories are emerging. The MICrONS reconstruction co-registers the visual responses of tens of thousands of cortical neurons with a dense electron-microscopy volume that includes full dendritic trees \citep{bae_functional_2025}, which makes it possible to test if co-tuned inputs share branches in real cortex. Work in the Drosophila looming circuit already takes this step, using electron-microscopy-constrained compartmental models to examine how differently tuned presynaptic populations are arranged across a dendrite and how that arrangement shapes integration \citep{moreno-sanchez_morphology_2024}.  Technologies do not yet exist to examine experimentally how synapse placement and connectivity patterns change during learning.

\subsection*{Inhibitory clustering and circuit analogues}

We found that inhibitory FSCs formed on distal branches in our highest performing models. Dual-color in vivo imaging shows that inhibitory synapse remodeling is not only spatially clustered within roughly ten micrometers and shaped by sensory experience \citep{chen_clustered_2012}, but also that excitatory and inhibitory inputs are co-regulated on individual branches \cite{iascone_whole-neuron_2020}. What is not established is whether inhibitory clusters form by the same correlation-driven logic that builds excitatory ones. The more surprising result is what happened when we removed the NMDAR-dependent-like nonlinearity. Excitatory clustering fell, as expected, but inhibitory clustering rose. This may be compensation, where with supralinear amplification gone, the network can no longer separate classes by boosting on-target excitatory variance, so it leans harder on the lever that remains, which is divisive suppression of off-target variance through clustered inhibition. The compensation is not enough to rescue performance, but it shows that the two clustering populations are not independent and that inhibitory placement is responsive to excitatory drive.

Seeing the importance of inhibitory activity, it becomes relevant that the circuit analogues for Dendrinet remain to be defined. Dendrinet assumes that excitatory and inhibitory inputs arrive at the readout neuron from a shared upstream source, that the inhibition is feedforward, and that optimization placed both onto distal branches. Possible analogues can be found in cortex and hippocampus. In cortex, layer 1 is made up mostly of the distal apical dendrites of deeper pyramidal cells together with a sparse set of interneurons. These layer 1 cells deliver inhibition to local distal dendrites under the control of long-range top-down input \citep{abs_learning-related_2018}. In hippocampus, the temporoammonic pathway carries entorhinal input into stratum lacunosum-moleculare, where it excites the distal dendrites of CA1 pyramidal cells and, through the same afferents, drives neurogliaform interneurons that feed inhibition onto those same distal dendrites \citep{capogna_neurogliaform_2011, price_neurogliaform_2005}. The arrangement of one shared source driving both distal excitation and feedforward distal inhibition mirrors that learned by Dendrinet during task optimization. In both cases, inhibition acts to normalize, which is the dendritic instance of the canonical divisive normalization computation \citep{carandini_normalization_2012}. Despite this, the cortical and hippocampal cases are not equally good analogies. The hippocampal neurogliaform motif matches the feedforward, single-source structure of the model directly, whereas the cortical layer 1 case is a looser fit, because it is largely top-down rather than a feedforward copy of the driving input.

\subsection*{Limitations and Future directions}
Dendrinet treats the dendritic tree as a set of discrete branches separated by depth, and it represents synaptic position as branch membership rather than as continuous distance along a dendrite. This keeps the model tractable and matches the branch-as-subunit abstraction, but it gives up the fine spatial scale at which real clusters form, which is on the order of five to ten micrometers within a branch \citep{takahashi_locally_2012, ujfalussy_impact_2020}. Moving toward continuous position, and toward networks of such neurons, is the natural next step, and single-branch structural-plasticity models offer a starting point for scaling without adding full temporal dynamics \citep{moldwin_gradient_2021}. Two extensions matter most for biological fidelity, one spatial and one temporal. The spatial extension would refine position from branch identity to location within a branch, so that the geometry inside a cluster becomes something the model can represent. The temporal extension matters because the correlations that define a functional cluster are ultimately correlations in time, and a model with subthreshold dynamics could ask how input timing, not only input grouping, shapes which clusters form. We have shown that grouping alone supports covariance computation. Further work investigating whether timing sharpens that result or constrains it could be promising.

The position of a synapse on a dendrite is a functional variable, not a wiring detail. Reading it that way changes how an NMDAR-block deficit should be interpreted, because the block removes nonlinear computation and reorganizes inhibitory clustering alongside any loss of excitatory clusters. It also makes the testable prediction that impairing the dendritic nonlinearity should redistribute clustering toward inhibition rather than abolish it, something dual-color imaging during pharmacological block could resolve \citep{dong_simultaneous_2025}. Whether co-tuned excitatory and inhibitory inputs share distal branches in real tissue is now within reach of co-registered functional and EM data \citep{bae_functional_2025}, and it is the measurement that would tell us whether the placement logic Dendrinet learned is the one biology uses.

\newpage
\newpage

\begin{ack}
This work has been made possible in part by a gift from the Chan Zuckerberg Initiative Foundation to establish the Kempner Institute for the Study of Natural and Artificial Intelligence at Harvard University.
\end{ack}

\medskip

\small
\bibliographystyle{unsrtnat} % Defines the layout (see options below)
\bibliography{clustering_bibliography}    % Points to references.bib (NO .bib extension!)

%%%%%%%%%%%%%%%%%%%%%%%%%%%%%%%%%%%%%%%%%%%%%%%%%%%%%%%%%%%%
\newpage
\appendix

\section{Functional Synapse Cluster benefit proof}

\begin{center}
    \includegraphics[width=0.75 \linewidth]{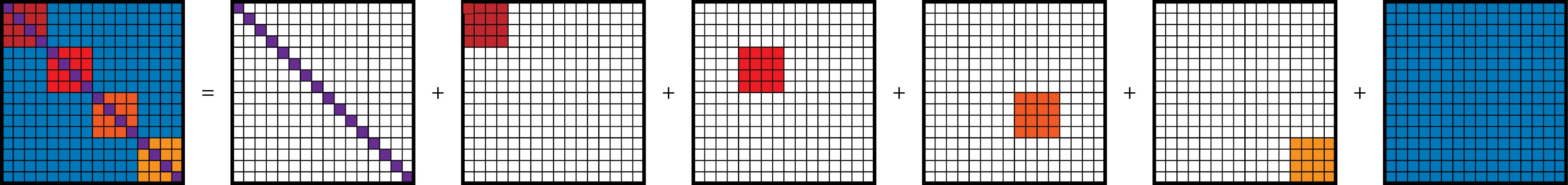}
\end{center}

\begin{equation}\label{eq:cov_decomp}
    \Sigma_k = (1-\rho_l)I_N
    + (\rho_l - \rho_g) \sum_j^{N_b} P_k e^{(j)}e^{(j)^\top} P_k^\top
    + \rho_g \mathbf{1}\mathbf{1}^\top,
\end{equation}

where $e^{(j)} \in \{0, 1 \}^N$ is a binary vector with nonzero entries at the indices satisfying $ \frac{jN}{N_b}\leq i < \frac{(j + 1)N}{N_b}$.

\begin{lemma}[Precision matrix decomposition] \label{lem:inverse_decomposition}
The class-$k$ precision matrix, $\Sigma_k^{-1}$, can be written as:
\begin{gather*}
    \alpha = 1-\rho_l, \qquad
    \beta = 1+(n-1)\rho_l-n\rho_g \\
    \gamma = \frac{1}{\beta(\beta+\rho_gN)} \\
    \Sigma^{-1}_k = \frac{1}{\alpha}I_N - \frac{(\rho_l - \rho_g)}{\alpha \beta}\sum^{N_b}_i \left( P_k e_ie_i^\top P_k^{\top} \right) - \rho_g \gamma \mathbf{1}\mathbf{1}^\top.
\end{gather*}

\end{lemma}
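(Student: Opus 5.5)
The plan is to invert $\Sigma_k$ in two stages, with a Sherman--Morrison step at the end. Equation~\eqref{eq:cov_decomp} writes $\Sigma_k$ as a sum of three terms. I first invert the sum of the first two terms in closed form, using the fact that the block term is a scaled orthogonal projector. I then add the rank-one global term $\rho_g\mathbf 1\mathbf 1^\top$ via Sherman--Morrison. Throughout I set $\delta=\rho_l-\rho_g$, let $n=N/N_b$ be the block size, and write
\[
B_k=\sum_{j}P_k e^{(j)}e^{(j)\top}P_k^\top .
\]

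The first step is to record the algebra of $B_k$. The block indicators $e^{(j)}$ have disjoint supports of size $n$, so $e^{(j)\top}e^{(i)}=n\,\delta_{ij}$. Since $P_k$ is a permutation matrix, $P_k^\top P_k=I$. Together these give $B_k^2=nB_k$. Moreover, $\sum_j e^{(j)}=\mathbf 1$ and $P_k^\top\mathbf 1=\mathbf 1$, so $B_k\mathbf 1=n\mathbf 1$. Note that the permutation plays no role beyond these two identities; this is why the formula is the same for every class $k$.

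The second step inverts $A_k=\alpha I+\delta B_k$. Using $B_k^2=nB_k$, I make the ansatz $A_k^{-1}=\frac1\alpha(I-cB_k)$. Multiplying out gives
\[
A_k\cdot\tfrac1\alpha(I-cB_k)=I+\Big(\tfrac{\delta}{\alpha}-c-\tfrac{c\delta n}{\alpha}\Big)B_k .
\]
Setting the coefficient of $B_k$ to zero yields $c=\delta/(\alpha+\delta n)$. One checks that $\alpha+\delta n=1+(n-1)\rho_l-n\rho_g=\beta$, so
\[
A_k^{-1}=\frac1\alpha I-\frac{\delta}{\alpha\beta}B_k .
\]
These are exactly the first two terms of the claimed formula.

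The third step adds the global term. Sherman--Morrison gives
\[
\Sigma_k^{-1}=A_k^{-1}-\frac{\rho_g\,A_k^{-1}\mathbf 1\mathbf 1^\top A_k^{-1}}{1+\rho_g\mathbf 1^\top A_k^{-1}\mathbf 1}.
\]
Since $B_k\mathbf 1=n\mathbf 1$,
\[
A_k^{-1}\mathbf 1=\frac1\alpha\Big(1-\frac{\delta n}{\beta}\Big)\mathbf 1=\frac{\beta-\delta n}{\alpha\beta}\mathbf 1=\frac1\beta\mathbf 1,
\]
and hence $\mathbf 1^\top A_k^{-1}\mathbf 1=N/\beta$. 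The correction term is therefore
\[
\frac{\rho_g\beta^{-2}\,\mathbf 1\mathbf 1^\top}{1+\rho_gN/\beta}=\frac{\rho_g}{\beta(\beta+\rho_gN)}\mathbf 1\mathbf 1^\top=\rho_g\gamma\,\mathbf 1\mathbf 1^\top,
\]
which completes the decomposition. As a sanity check, multiplying the result back against $\Sigma_k$ should return $I_N$.

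The main obstacle is not the algebra but the side conditions needed to justify the inversions. I need $\alpha>0$, which holds because $\rho_l<1$. I need $\beta>0$, which holds because $\beta=\alpha+n\delta$ with $\delta>0$. Finally, Sherman--Morrison requires a nonzero denominator, which holds because $\beta+\rho_gN>0$ when $\rho_g\ge0$. These all follow from the standing assumption $0\le\rho_g<\rho_l<1$ in Theorem~\ref{thm:clustering_normative}.

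One more point needs attention: the indexing of $e^{(j)}$ in \eqref{eq:cov_decomp} must give disjoint supports of size exactly $n=N/N_b$ covering $\{0,\dots,N-1\}$. I will state this explicitly, since both identities $B_k^2=nB_k$ and $B_k\mathbf 1=n\mathbf 1$ rely on it.
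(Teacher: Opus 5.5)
Your proof is correct. It shares the paper's final step but inverts the block part differently.

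The paper builds $\bigl(\alpha I_N+(\rho_l-\rho_g)B_k\bigr)^{-1}$ one block at a time. It proves by induction over the $N_b$ blocks that each Sherman--Morrison update subtracts $\frac{\rho_l-\rho_g}{\alpha\beta}\,e_me_m^\top$; disjointness of the blocks makes $e_m$ an eigenvector of the current partial inverse with eigenvalue $1/\alpha$. It then adds $\rho_g\mathbf{1}\mathbf{1}^\top$ with exactly your last Sherman--Morrison step, via $A_k^{-1}\mathbf{1}=\mathbf{1}/\beta$.

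You replace the induction with the single identity $B_k^2=nB_k$ and an ansatz in the span of $I_N$ and $B_k$. This is shorter, and it makes clear that only $B_k^2=nB_k$ and $B_k\mathbf{1}=n\mathbf{1}$ are used, so the coefficients are visibly class-independent. The paper's route is purely mechanical, using nothing beyond repeated Sherman--Morrison, at the cost of length.

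Your write-up has three further advantages. You stay in the statement's notation, whereas the paper's proof reuses $\beta$ for $1-\rho_l$ (the statement's $\alpha$) and writes $\beta+n\beta_L$ for the statement's $\beta$. You derive the nonvanishing of $\alpha$, $\beta$ and $\beta+\rho_gN$ from $0\le\rho_g<\rho_l<1$ rather than assuming the denominators are nonzero. Finally, your caveat that $j$ must run over $0,\dots,N_b-1$ for the supports of $e^{(j)}$ to tile $\{0,\dots,N-1\}$ is well placed.
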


\begin{proof}

\noindent\textbf{Context.}
For a latent $z$ and a candidate cluster configuration $k$, decoding chooses
\[
K^* = \argmin_k \; z^\top \Sigma_k^{-1} z ,
\]
so we need a usable closed form for $\Sigma_k^{-1}$. The covariance for configuration $k$
is a scaled identity plus a sum of block dyads plus a rank-one background term:
\[
\Sigma_k = (1-P_L)I_N
         + (P_L-P_g)\sum_{i=1}^{N_b} e_i^{(k)} e_i^{(k)\top}
         + P_g\,\vone\vone^\top .
\]
 
\medskip
\noindent\textbf{Notation.} Throughout write $e_i \equiv e_i^{(k)}$ and set
\[
\beta = 1-P_L, \qquad \bl = P_L-P_g .
\]
There are $N_b$ blocks of size $n$ that partition $\{1,\dots,N\}$ with $N = N_b\,n$. The
block indicators $e_i\in\{0,1\}^N$ therefore satisfy
\[
e_i^\top e_j = n\,\delta_{ij}, \qquad \sum_{i=1}^{N_b} e_i = \vone,
\qquad e_i^\top \vone = n, \qquad \vone^\top\vone = N .
\]
(The relation $e_i^\top e_j = n\,\delta_{ij}$ is the ``by construction'' orthogonality of the
blocks; the factor $n$ is what produces the $\beta+n\bl$ denominators below.)

\end{proof}

\begin{lemma}
With the notation above, and whenever the scalar denominators are nonzero,
\[
\Sigma_k^{-1}
= \frac{1}{\beta}\,I_N
 - \frac{\bl}{\beta(\beta+n\bl)}\sum_{i=1}^{N_b} e_i e_i^\top
 - \frac{P_g}{(\beta+n\bl)\,(\beta+n\bl+P_g N)}\,\vone\vone^\top .
\]
\end{lemma}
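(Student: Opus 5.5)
The plan is to invert $\Sigma_k$ in two stages: first the part $\beta I_N+\bl B$ with $B:=\sum_{i=1}^{N_b}e_ie_i^\top$, using that $B/n$ is an orthogonal projection; then the rank-one background term $P_g\vone\vone^\top$, via the Sherman--Morrison formula. The identities listed in the notation paragraph ($e_i^\top e_j=n\delta_{ij}$, $\sum_i e_i=\vone$, $\vone^\top\vone=N$) should supply everything needed.

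\textbf{Step 1 (structure of $B$).} From $e_i^\top e_j=n\delta_{ij}$,
\[
B^2=\sum_{i,j}e_i(e_i^\top e_j)e_j^\top=nB,
\]
so $\Pi:=B/n$ is symmetric and idempotent. From $\sum_i e_i=\vone$ and $e_i^\top\vone=n$, we get $B\vone=\sum_i e_i\,n=n\vone$, so $\vone$ lies in the range of $\Pi$.

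\textbf{Step 2 (block part).} Let $A:=\beta I_N+\bl B=\beta(I_N-\Pi)+(\beta+n\bl)\Pi$. Since $\Pi$ and $I_N-\Pi$ are complementary projections, we can invert each piece separately. Provided $\beta\neq0$ and $\beta+n\bl\neq0$,
\[
A^{-1}=\tfrac1\beta(I_N-\Pi)+\tfrac{1}{\beta+n\bl}\Pi
=\tfrac1\beta I_N-\tfrac{\bl}{\beta(\beta+n\bl)}B .
\]
The second equality is the one-line simplification $\tfrac1n\bigl(\tfrac1{\beta+n\bl}-\tfrac1\beta\bigr)=-\tfrac{\bl}{\beta(\beta+n\bl)}$. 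Alternatively, one can confirm $AA^{-1}=I_N$ directly using $B^2=nB$.

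\textbf{Step 3 (rank-one correction).} We have $\Sigma_k=A+P_g\vone\vone^\top$. By Step 1, $A^{-1}\vone=\tfrac{1}{\beta+n\bl}\vone$, hence $\vone^\top A^{-1}\vone=\tfrac{N}{\beta+n\bl}$. Sherman--Morrison then gives
\[
\Sigma_k^{-1}=A^{-1}-\frac{P_g\,A^{-1}\vone\vone^\top A^{-1}}{1+P_g\vone^\top A^{-1}\vone}
=A^{-1}-\frac{P_g}{(\beta+n\bl)^2}\cdot\frac{\beta+n\bl}{\beta+n\bl+P_gN}\,\vone\vone^\top .
\]
This simplifies to exactly the stated coefficient $\tfrac{P_g}{(\beta+n\bl)(\beta+n\bl+P_gN)}$. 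Combining with Step 2 yields the claim. The hypothesis that the scalar denominators are nonzero is precisely $\beta$, $\beta+n\bl$, and $\beta+n\bl+P_gN$ nonzero, which are the conditions under which Steps 2 and 3 are valid.

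\textbf{Expected obstacle.} There is no real obstacle. The only points needing care are the identity $B\vone=n\vone$, which makes $\vone$ an eigenvector of $A$ and collapses the Sherman--Morrison update to a multiple of $\vone\vone^\top$, and the bookkeeping that turns $\tfrac{1}{\beta+n\bl}\cdot\tfrac{1}{1+P_gN/(\beta+n\bl)}$ into the stated product form. As an independent check, I would make the ansatz $\Sigma_k^{-1}=aI_N+bB+c\,\vone\vone^\top$, multiply by $\Sigma_k$, and match the coefficients of $I_N$, $B$ and $\vone\vone^\top$ using the same three identities.
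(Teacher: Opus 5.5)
Your proof is correct. Its final step is the same as the paper's: Sherman--Morrison on $P_g\mathbf{1}\mathbf{1}^\top$, using that $\mathbf{1}$ is an eigenvector of the block part with eigenvalue $\beta+n\beta_L$.

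You differ from the paper in how you invert $A=\beta I_N+\beta_L\sum_i e_ie_i^\top$.

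\textbf{The paper's route.} It adds the block dyads one at a time and proves by induction on $m$ that
$\bigl(\beta I_N+\beta_L\sum_{i\le m}e_ie_i^\top\bigr)^{-1}=\frac{1}{\beta}I_N-\frac{\beta_L}{\beta(\beta+n\beta_L)}\sum_{i\le m}e_ie_i^\top$.
Each step is a rank-one Sherman--Morrison update that uses only $e_i^\top e_m=0$ for $i<m$ and $e_m^\top e_m=n$.

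\textbf{Your route.} You note $B^2=nB$, so $B/n$ is an orthogonal projection. You then invert $A$ in one shot from its two-eigenvalue spectral decomposition.

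\textbf{What each buys.} The paper's argument uses a single tool throughout and needs only pairwise disjointness of the blocks at each step. Yours is shorter, avoids the induction, and exposes the full spectrum of $\Sigma_k$:
\begin{itemize}
\item eigenvalue $\beta$ on the orthogonal complement of the block indicators;
\item eigenvalue $\beta+n\beta_L$ on the part of their span orthogonal to $\mathbf{1}$;
\item eigenvalue $\beta+n\beta_L+P_gN$ on $\mathbf{1}$.
\end{itemize}
This shows why exactly these three quantities are the relevant nondegeneracy conditions. It also gives, as a by-product, the closed form $\det\Sigma_k=\beta^{N-N_b}(\beta+n\beta_L)^{N_b-1}(\beta+n\beta_L+P_gN)$. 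That determinant is independent of the permutation $k$, which is the fact the paper relies on in the QDA lemma when the log-determinant terms cancel.
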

 
\begin{proof}
Introduce the partial covariances
\[
\Sigma^{(0)} = \beta I_N,
\qquad
\Sigma^{(m)} = \beta I_N + \bl\sum_{i=1}^{m} e_i e_i^\top \quad (1\le m\le N_b),
\]
so that $\Sigma_k = \Sigma^{(N_b)} + P_g\,\vone\vone^\top$. We build the inverse by adding one
rank-one term at a time using the Sherman-Morrison identity
\[
(A + uv^\top)^{-1} = A^{-1} - \frac{A^{-1}u\,v^\top A^{-1}}{1 + v^\top A^{-1} u}. \tag{SM}
\]
 
\medskip
\noindent\textbf{Step 1 (block terms, by induction).}
We claim that for every $0\le m\le N_b$,
\[
\big(\Sigma^{(m)}\big)^{-1}
= \frac{1}{\beta}\,I_N
 - \frac{\bl}{\beta(\beta+n\bl)}\sum_{i=1}^{m} e_i e_i^\top . \tag{$\star$}
\]
 
\emph{Base case $m=0$.} The right-hand side is $\tfrac{1}{\beta}I_N = (\beta I_N)^{-1}$. \checkmark
 
\emph{Inductive step.} Suppose $(\star)$ holds at level $m-1$ and denote that matrix by
$A^{-1}$. Because the blocks are disjoint, $e_i^\top e_m = n\,\delta_{im} = 0$ for all
$i\le m-1$, so every dyad in $A^{-1}$ annihilates $e_m$ and
\[
A^{-1} e_m = \frac{1}{\beta}\,e_m,
\qquad
e_m^\top A^{-1} e_m = \frac{1}{\beta}\,e_m^\top e_m = \frac{n}{\beta}.
\]
Since $\Sigma^{(m)} = A + \bl\, e_m e_m^\top$, applying (SM) with $u=v=e_m$ (and coefficient $\bl$) gives
\[
\big(\Sigma^{(m)}\big)^{-1}
= A^{-1} - \frac{\bl\,(A^{-1}e_m)(A^{-1}e_m)^\top}{1 + \bl\, e_m^\top A^{-1} e_m}
= A^{-1} - \frac{(\bl/\beta^2)\,e_m e_m^\top}{1 + n\bl/\beta}
= A^{-1} - \frac{\bl}{\beta(\beta+n\bl)}\,e_m e_m^\top ,
\]
where the last step used $1 + n\bl/\beta = (\beta+n\bl)/\beta$. This is exactly $(\star)$ at
level $m$, completing the induction. In particular,
\[
\big(\Sigma^{(N_b)}\big)^{-1}
= \frac{1}{\beta}\,I_N
 - \frac{\bl}{\beta(\beta+n\bl)}\sum_{i=1}^{N_b} e_i e_i^\top . \tag{1}
\]
 
\medskip
\noindent\textbf{Step 2 (background term).}
Using $\sum_i e_i = \vone$ and $e_i^\top\vone = n$, apply (1) to $\vone$:
\[
\big(\Sigma^{(N_b)}\big)^{-1}\vone
= \frac{1}{\beta}\vone
 - \frac{\bl}{\beta(\beta+n\bl)}\sum_{i=1}^{N_b} e_i\,(e_i^\top\vone)
= \frac{1}{\beta}\vone - \frac{n\bl}{\beta(\beta+n\bl)}\vone
= \frac{1}{\beta+n\bl}\,\vone ,
\]
since $\tfrac{1}{\beta}-\tfrac{n\bl}{\beta(\beta+n\bl)}
= \tfrac{(\beta+n\bl)-n\bl}{\beta(\beta+n\bl)} = \tfrac{1}{\beta+n\bl}$. Consequently
\[
\vone^\top\big(\Sigma^{(N_b)}\big)^{-1}\vone = \frac{\vone^\top\vone}{\beta+n\bl} = \frac{N}{\beta+n\bl}.
\]
Now $\Sigma_k = \Sigma^{(N_b)} + P_g\,\vone\vone^\top$, so a final application of (SM) with
$u=v=\vone$ (coefficient $P_g$) yields
\[
\Sigma_k^{-1}
= \big(\Sigma^{(N_b)}\big)^{-1}
 - \frac{P_g\,\big(\Sigma^{(N_b)}\big)^{-1}\vone\,\vone^\top\big(\Sigma^{(N_b)}\big)^{-1}}
        {1 + P_g\,\vone^\top\big(\Sigma^{(N_b)}\big)^{-1}\vone}
= \big(\Sigma^{(N_b)}\big)^{-1}
 - \frac{\dfrac{P_g}{(\beta+n\bl)^2}\,\vone\vone^\top}{1 + \dfrac{P_g N}{\beta+n\bl}} .
\]
The scalar prefactor simplifies to
\[
\frac{P_g/(\beta+n\bl)^2}{1 + P_g N/(\beta+n\bl)}
= \frac{P_g}{(\beta+n\bl)\,(\beta+n\bl+P_g N)} ,
\]
and substituting (1) for $\big(\Sigma^{(N_b)}\big)^{-1}$ gives the stated expression.

\end{proof}

\begin{lemma}[QDA necessary average condition]\label{lem:qda_snr}
Under the factorized covariance model with equal parameters $(\rho_l, \rho_g, n)$ across classes, and also equal class means and priors, with $z$ centered by the common mean, a necessary condition for correct classification of $z$ to class $k^*$ under the QDA decision rule requires:
\begin{gather*}
    \sigma_{\mathrm{avg}}(z,k)\triangleq
    \frac{(K-1)\sum_i z^\top P_ke_ie_i^\top P_k^\top z}
    {\sum_{k'\neq k}\sum_i z^\top P_{k'}e_ie_i^\top P_{k'}^\top z}\\
    \sigma_{\mathrm{avg}}(z,k^*)>1.
\end{gather*}

\end{lemma}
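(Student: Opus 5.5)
The plan is to write the QDA discriminant out explicitly using Lemma~\ref{lem:inverse_decomposition}, observe that all but one of its terms are the same for every class, and then turn the pairwise win condition into an average over competitors. Under the stated hypotheses (equal means, with $z$ centered by the common mean, and equal priors), the QDA score for class $k$ is $-\tfrac12 z^\top\Sigma_k^{-1}z-\tfrac12\log\det\Sigma_k$ plus a class-independent constant.

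\emph{Step 1: the log-determinant terms drop out.} Since $\Sigma_k=P_k\Sigma_0P_k^\top$ with $P_k$ a permutation matrix, every $\det\Sigma_k$ is equal. The decision rule therefore reduces to $k^\ast=\argmin_k z^\top\Sigma_k^{-1}z$, which is the rule quoted in the Context paragraph of the proof of Lemma~\ref{lem:inverse_decomposition}.

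\emph{Step 2: isolate the class-dependent part.} Substituting the decomposition of Lemma~\ref{lem:inverse_decomposition} gives
\[
z^\top\Sigma_k^{-1}z=\frac{1}{\alpha}\|z\|^2-\frac{\rho_l-\rho_g}{\alpha\beta}\,Q_k(z)-\rho_g\gamma\,(\mathbf 1^\top z)^2,
\qquad Q_k(z)=\sum_i z^\top P_ke_ie_i^\top P_k^\top z .
\]
Because $P_k^\top\mathbf 1=\mathbf 1$, the first and third terms do not depend on $k$. Under the standing assumptions $0\le\rho_g<\rho_l<1$ and $n\ge1$, we have $\alpha=1-\rho_l>0$ and $\beta=1+(n-1)\rho_l-n\rho_g=(1-\rho_l)+n(\rho_l-\rho_g)>0$. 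Hence the coefficient $c=(\rho_l-\rho_g)/(\alpha\beta)$ is strictly positive. Minimizing $z^\top\Sigma_k^{-1}z$ is therefore equivalent to maximizing $Q_k(z)$.

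\emph{Step 3: average over competitors.} Correct classification to $k^\ast$, taken as a strict argmin with ties counted as errors, means $Q_{k^\ast}(z)>Q_{k'}(z)$ for every $k'\neq k^\ast$. Summing these $K-1$ inequalities gives
\[
(K-1)\,Q_{k^\ast}(z)>\sum_{k'\neq k^\ast}Q_{k'}(z).
\]
Each $Q_{k'}(z)=\sum_i(e_i^\top P_{k'}^\top z)^2\ge0$. When the right-hand side is positive, dividing by it yields $\sigma_{\mathrm{avg}}(z,k^\ast)>1$.

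The only delicate point is the degenerate case in which the denominator vanishes. This happens only if every block sum of $z$ is zero under every competitor permutation, so $z$ lies in a measure-zero set of the Gaussian law. I would exclude this case explicitly, or adopt the convention $\sigma_{\mathrm{avg}}=+\infty$ when the numerator is positive. In that case the numerator is indeed positive, because $Q_{k^\ast}>Q_{k'}=0$.

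A secondary point needs care in the write-up. The appendix proofs use $\beta$ with two different meanings, as $1-P_L$ and as $1+(n-1)\rho_l-n\rho_g$. I will fix the convention of Lemma~\ref{lem:inverse_decomposition} and verify that the coefficient of the block-dyad term is positive under that convention. The argument depends only on the sign of this coefficient, not on its exact value.
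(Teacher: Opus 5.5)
Your proposal is correct and follows essentially the paper's route: cancel the equal log-determinants, substitute the precision decomposition, drop the class-invariant terms using the positive block coefficient, and sum the $K-1$ pairwise inequalities (the paper sums before substituting and you after, which makes no difference). Your explicit check that $\beta=(1-\rho_l)+n(\rho_l-\rho_g)>0$, and your handling of the measure-zero case where the denominator of $\sigma_{\mathrm{avg}}$ vanishes, are small points that the paper's proof passes over silently.
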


% Tighter proof
\begin{proof}
The QDA decision rule assigns $z$ to $k^*$ when: 
\begin{equation*}
z^\top\Sigma_{k^*}^{-1}z + \log|\Sigma_{k^*}| < z^\top\Sigma_{k'}^{-1}z + \log|\Sigma_{k'}|, \quad \forall k'\neq k^*.
\end{equation*}
Summing over all $k' \neq k^*$ and rearranging:
\begin{equation*}
(K-1)\left(z^\top\Sigma_{k^*}^{-1}z + \log|\Sigma_{k^*}|\right) < \sum_{k'\neq k^*}\left(z^\top\Sigma_{k'}^{-1}z + \log|\Sigma_{k'}|\right).
\end{equation*}
Since all classes have equal parameters $(\rho_l, \rho_g, n)$, their covariance matrices have identical eigenvalues and therefore equal determinants, so $\log|\Sigma_{k}| = \log|\Sigma_{k'}|$ for all $k, k'$. These log determinant terms cancel leaving:
\begin{equation*}
    z^\top\Sigma_{k^*}^{-1}z < \frac{1}{K-1}\sum_{k'\neq k^*}z^\top\Sigma_{k'}^{-1}z.
\end{equation*}
Substituting Lemma~\ref{lem:inverse_decomposition} and rearranging terms gives:
\begin{equation*}
    \frac{(K-1)\left(\frac{1}{\alpha}\|z\|^2 - \frac{(\rho_l-\rho_g)}{\alpha\beta}\sum_i z^\top P_{k^*}e_ie_i^\top P_{k^*}^\top z - \rho_g\gamma z^\top\mathbf{1}\mathbf{1}^\top z\right)}{\sum_{k'\neq k^*}\left(\frac{1}{\alpha}\|z\|^2 - \frac{(\rho_l-\rho_g)}{\alpha\beta}\sum_i z^\top P_{k'}e_ie_i^\top P_{k'}^\top z - \rho_g\gamma z^\top\mathbf{1}\mathbf{1}^\top z\right)} < 1.
\end{equation*}
Since the terms $\frac{1}{\alpha}\|z\|^2$ and $\rho_g \gamma z^\top \mathbf{1}\mathbf{1}^\top z$ are class-invariant, they appear identically in every term in the numerator and denominator the inequality is equivalent to:
\[
-\frac{\rho_l-\rho_g}{\alpha\beta}\sum_i z^\top P_{k^*}e_ie_i^\top P_{k^*}^\top z
< -\frac{\rho_l-\rho_g}{\alpha\beta(K-1)}
\sum_{k'\neq k^*}\sum_i z^\top P_{k'}e_ie_i^\top P_{k'}^\top z.
\]

Since $\rho_l > \rho_g$ and $\alpha, \beta > 0$, we have $\frac{\alpha \beta}{(\rho_l - \rho_g)} > 0$. Multiplying both sides by $-\frac{\alpha \beta}{(\rho_l - \rho_g)}$ and flipping the inequality yields:
\begin{equation*}
    \sum_i z^\top P_{k^*}e_ie_i^\top P_{k^*}^\top z > \frac{1}{K-1}\sum_{k'\neq k^*}\sum_i z^\top P_{k'}e_ie_i^\top P_{k'}^\top z.
\end{equation*}
Collecting all terms on one side gives:
\begin{equation}
    \sigma_{\mathrm{avg}}(z, k^*) = \frac{(K-1)\sum_i z^\top P_{k^*}e_ie_i^\top P_{k^*}^\top z}{\sum_{k'\neq k^*}\sum_i z^\top P_{k'}e_ie_i^\top P_{k'}^\top z} > 1.
\end{equation}
\end{proof}

\begin{lemma}[Variance of a neuron]\label{lem:single_var}
\[
\operatorname{Var}[\phi(W^\top z)\mid k]
\approx \big[\phi'(W^\top\bar z_k)\big]^2W^\top\Sigma_kW.
\]
\end{lemma}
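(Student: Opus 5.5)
The plan is a first-order delta-method argument: linearize $\phi$ around the class-conditional mean of the projected input, and compute the variance of the resulting linear functional of $z$ exactly. Write $u = W^\top z$. Conditional on class $k$, we have $z \sim \mathcal N(\bar z_k, \Sigma_k)$ in the latent ($z$-space) model underlying the covariance decomposition \eqref{eq:cov_decomp}. So $u$ is a scalar Gaussian with mean $\bar u_k = W^\top \bar z_k$ and variance $\operatorname{Var}[u\mid k] = W^\top \Sigma_k W$. This is the standard identity $\operatorname{Var}[a^\top z] = a^\top \operatorname{Cov}(z)\, a$, and it is the only exact ingredient needed.

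Next, assume $\phi$ is twice differentiable near $\bar u_k$. I will Taylor expand
\[
\phi(u) = \phi(\bar u_k) + \phi'(\bar u_k)(u-\bar u_k) + R_k(u), \qquad |R_k(u)| \le \tfrac12 \sup|\phi''|\,(u-\bar u_k)^2 .
\]
The constant term contributes nothing to the variance. The linear term contributes $[\phi'(W^\top \bar z_k)]^2\, W^\top\Sigma_k W$, which is the claimed expression. The remaining contributions are $\operatorname{Var}[R_k]$ and $2\operatorname{Cov}(\phi'(\bar u_k)(u-\bar u_k), R_k)$, and these make up the error in the ``$\approx$''.

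The main step is to bound the remainder, so that ``$\approx$'' means equality up to terms of higher order in $s_k^2 = W^\top\Sigma_k W$. Using Gaussian moments, $\mathbb E(u-\bar u_k)^4 = 3 s_k^4$ gives $\operatorname{Var}[R_k] = O(\|\phi''\|_\infty^2 s_k^4)$. For the cross term, $\mathbb E[(u-\bar u_k)R_k]$ is controlled either by Cauchy--Schwarz, giving $O(|\phi'|\,\|\phi''\|_\infty s_k^3)$, or, if $\phi$ is $C^3$, by noting that the odd third moment vanishes, which leaves an $O(s_k^4)$ term. Hence
\[
\operatorname{Var}[\phi(W^\top z)\mid k] = [\phi'(W^\top\bar z_k)]^2 W^\top\Sigma_k W + O(s_k^3).
\]
The relative error is small whenever $s_k\,\|\phi''\|_\infty/|\phi'(\bar u_k)|$ is small, i.e.\ in the small-fluctuation, locally linear regime.

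The obstacle I expect is that the Poirazi nonlinearity \eqref{eq:vout} is only piecewise smooth, with kinks at $0$ and $b$. To handle this, I would state the lemma under the assumption that $\bar u_k$ lies in the interior of a smooth piece, at distance $\delta$ from the kinks. On the event $|u-\bar u_k|<\delta$, the expansion above applies. The complementary event has Gaussian tail probability $O(e^{-\delta^2/2s_k^2})$, and since $\phi$ is bounded on $[0,1]$, that event adds only an exponentially small amount to the variance. This justifies the ``$\approx$'' as the leading-order term used afterwards for the linearized SNR in Theorem~\ref{thm:clustering_normative}.
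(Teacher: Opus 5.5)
Your proposal is correct and uses the same first-order delta-method linearization of $\phi$ about $W^\top\bar z_k$ as the paper, whose proof computes the mean and second moment of the linearized output and subtracts them to get $[\phi'(W^\top\bar z_k)]^2\, W^\top\Sigma_k W$. The paper never bounds the Taylor remainder or treats the kinks of the Poirazi nonlinearity, so your $O(s_k^3)$ error bound and Gaussian-tail argument strengthen the result by making the ``$\approx$'' precise, rather than taking a different route.
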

\begin{proof}
\noindent
Let $y(z)$ be the neuron output, linearized to first order about the mean
latent $\bar{z}$:
\begin{align}
y(z) &= \phi(w^\top z) \\[4pt]
y(z) &\approx \phi(w^\top \bar{z}) + \phi'(w^\top \bar{z})\, w^\top (z - \bar{z}).
\end{align}

\noindent
Definitions:
\begin{equation}
\Var(z) = \Sigma_k, \qquad
\bar{z} = \E[z], \qquad
\Sigma_k = \E\!\left[(z-\bar{z})(z-\bar{z})^\top\right].
\end{equation}

\noindent
The variance of the output is
\begin{equation}
\Var\big(y(z)\big) = \E\!\left[y(z)^\top y(z)\right] - \E[y(z)]^\top\,\E[y(z)].
\end{equation}

\begin{align}
\E[y(z)]
  &= \E\!\left[\phi(w^\top \bar{z})\right]
   + \E\!\left[\phi'(w^\top \bar{z})\, w^\top (z - \bar{z})\right] \\
  &= \E\!\left[\phi(w^\top \bar{z})\right]
   + \phi'(w^\top \bar{z})\, w^\top \big(\E[z] - \bar{z}\big) \\[4pt]
\E[y(z)] &= \phi(w^\top \bar{z}),
\end{align}
since $\E[z] - \bar{z} = 0$.

\begin{align}
\E[y^\top y]
  &= \E\!\Big[\big(\phi(w^\top \bar{z}) + \phi'(w^\top \bar{z})\, w^\top (z-\bar{z})\big)^\top
              \big(\phi(w^\top \bar{z}) + \phi'(w^\top \bar{z})\, w^\top (z-\bar{z})\big)\Big]
\end{align}
The cross (linear-in-$z$) terms vanish under expectation, leaving
\begin{align}
\E[y^\top y]
  &= \E\!\left[\phi^\top(w^\top \bar{z})\,\phi(w^\top \bar{z})\right]
   + \E\!\Big[\underbrace{\big(\phi'(w^\top \bar{z})\, w^\top (z-\bar{z})\big)^\top
              \big(\phi'(w^\top \bar{z})\, w^\top (z-\bar{z})\big)}_{\text{scalar, so we can rearrange}}\Big] \\[4pt]
  &= \phi^\top(w^\top \bar{z})\,\phi(w^\top \bar{z})
   + \E\!\left[\phi'(w^\top \bar{z})\, w^\top (z-\bar{z})(z-\bar{z})^\top w\, \phi'(w^\top \bar{z})^\top\right] \\[4pt]
  &= \phi^\top(w^\top \bar{z})\,\phi(w^\top \bar{z})
   + \phi'(w^\top \bar{z})\, w^\top\, \E\!\left[(z-\bar{z})(z-\bar{z})^\top\right] w\, \phi'^\top(w^\top \bar{z}).
\end{align}
Substituting $\Sigma_k = \E[(z-\bar{z})(z-\bar{z})^\top]$,
\begin{equation}
\E[y^\top y]
  = \phi^\top(w^\top \bar{z})\,\phi(w^\top \bar{z})
  + \phi'(w^\top \bar{z})\, w^\top \Sigma_k\, w\, \phi'^\top(w^\top \bar{z}).
\end{equation}

\bigskip
\noindent\textbf{Result.}
Subtracting $\E[y]^\top\E[y] = \phi^\top(w^\top\bar{z})\,\phi(w^\top\bar{z})$ leaves
\begin{equation}
\Var(y \mid k) \approx\;
\phi'(w^\top \bar{z})\, w^\top \Sigma_k\, w\, \phi'^\top(w^\top \bar{z}).
\end{equation}

\end{proof}

\begin{lemma}[Shunting Variance Approximation]\label{lem:shunting_var}
The variance of a shunting neuron's output $y=\phi\left(\frac{E}{1+E+I}\right)$ for a leaf branch in Eq.~\ref{eq:vinf} is approximately:
Let $\bar V=\bar E/(1+\bar E+\bar I)$. Then
\[
\operatorname{Var}[y]\approx
\frac{\phi'(\bar V)^2}{(1+\bar E+\bar I)^4}
\left((1+\bar I)^2\operatorname{Var}[E]+\bar E^2\operatorname{Var}[I]
-2\bar E(1+\bar I)\operatorname{Cov}[E,I]\right).
\]
\end{lemma}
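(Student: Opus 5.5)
This is a first-order delta-method computation. The leaf-branch voltage is $V=f(E,I)=E/(1+E+I)$, which is Eq.~\ref{eq:vinf} with no upstream input, so $g_v=0$. I would linearize the composite map $y=\phi(f(E,I))$ about the mean point $(\bar E,\bar I)$, exactly as in Lemma~\ref{lem:single_var}. The only difference is that the inner map is now a nonlinear function of two random conductances rather than a linear projection $W^\top z$. Concretely, I will write
\[
y \approx \phi(\bar V) + \phi'(\bar V)\left[\partial_E f\,(E-\bar E) + \partial_I f\,(I-\bar I)\right],
\]
with the partials evaluated at $(\bar E,\bar I)$. Here $\bar V$ is taken to be $f(\bar E,\bar I)$, matching the statement's definition. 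This is not $\mathbb E[V]$, and the discrepancy is second order, so it is absorbed by the ``$\approx$''.

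The key steps in order are as follows.
\begin{enumerate}
\item Compute the gradient of $f$ by the quotient rule. Writing $D=1+\bar E+\bar I$, I expect
\[
\partial_E f=\frac{(1+\bar E+\bar I)-\bar E}{D^2}=\frac{1+\bar I}{D^2},
\qquad
\partial_I f=-\frac{\bar E}{D^2}.
\]
\item Observe that the linearized $y$ is an affine function of the vector $(E,I)$. Its mean is therefore $\phi(\bar V)$, and the cross terms vanish in expectation, by the same argument as in Lemma~\ref{lem:single_var}.
\item Apply $\Var[aE+bI]=a^2\Var[E]+b^2\Var[I]+2ab\,\mathrm{Cov}[E,I]$ with $a=\phi'(\bar V)(1+\bar I)/D^2$ and $b=-\phi'(\bar V)\bar E/D^2$. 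Factoring out $\phi'(\bar V)^2/D^4$ gives exactly the stated bracket, including the negative covariance term $-2\bar E(1+\bar I)\mathrm{Cov}[E,I]$.
\end{enumerate}
Equivalently, the result is $\phi'(\bar V)^2\,\nabla f^\top \mathrm{Cov}[(E,I)]\,\nabla f$. This form is the two-dimensional analogue of $W^\top\Sigma_k W$ in Lemma~\ref{lem:single_var}.

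There is an optional refinement connecting the result to the covariance model. With $E=w_E^\top z$ and $I=w_I^\top z$, one can substitute $\Var[E]=w_E^\top\Sigma_k w_E$, $\Var[I]=w_I^\top\Sigma_k w_I$, and $\mathrm{Cov}[E,I]=w_E^\top\Sigma_k w_I$. This form is useful later but is not needed for the lemma as stated.

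The main obstacle is justifying the approximation, not doing the algebra. Two points need care.
\begin{itemize}
\item The Poirazi function $h$ in Eq.~\ref{eq:vout} is only piecewise smooth. It has kinks at $0$ and at $b$, and its one-sided slopes at $b$ are $1$ and $m$. I would state the lemma for $\bar V$ in the interior of a smooth piece, so that $\phi'(\bar V)$ is well defined and the Taylor remainder of $\phi\circ f$ is $O(\|(E-\bar E,I-\bar I)\|^2)$. This holds because $f$ is smooth on $E,I\ge 0$, where $D\ge1$.
\item The approximation error in the variance is then of higher order in the fluctuations. For that I would assume the fluctuations are small relative to the distance from the kink and relative to the curvature scale of $f$, whose second derivatives are $O(1/D^3)$.
\end{itemize}
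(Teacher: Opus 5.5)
Your proposal is correct and follows essentially the same route as the paper: a first-order delta method about $(\bar E,\bar I)$, with the chain-rule partials $\phi'(\bar V)(1+\bar I)/(1+\bar E+\bar I)^2$ and $-\phi'(\bar V)\bar E/(1+\bar E+\bar I)^2$ substituted into the $2\times 2$ covariance quadratic form. Your two caveats go beyond what the paper states: that $\bar V$ means $f(\bar E,\bar I)$ rather than $\mathbb E[V]$, and that $\bar V$ must lie away from the kinks of the Poirazi nonlinearity so that $\phi'(\bar V)$ is defined.
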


\begin{proof}
The activity of the shunting neuron is
\[
y=\phi(V), \qquad V=\frac{E}{1+E+I}.
\]
Using the first-order delta method around $(\bar E,\bar I)$,
\[
\operatorname{Var}[y]\approx
\begin{pmatrix}
\pdv{y}{E} & \pdv{y}{I}
\end{pmatrix}
\begin{pmatrix}
\operatorname{Var}[E] & \operatorname{Cov}[E,I]\\
\operatorname{Cov}[E,I] & \operatorname{Var}[I]
\end{pmatrix}
\begin{pmatrix}
\pdv{y}{E}\\
\pdv{y}{I}
\end{pmatrix},
\]
where the derivatives are evaluated at $(\bar E,\bar I)$.

By the chain rule,
\[
\pdv{y}{E}
=
\phi'(V)\frac{1+I}{(1+E+I)^2},
\qquad
\pdv{y}{I}
=
-\phi'(V)\frac{E}{(1+E+I)^2}.
\]
Let
\[
\bar V=\frac{\bar E}{1+\bar E+\bar I}.
\]
Evaluating the derivatives at the mean gives
\[
\operatorname{Var}[y]\approx
\frac{\phi'(\bar V)^2}{(1+\bar E+\bar I)^4}
\left(
(1+\bar I)^2\operatorname{Var}[E]
+\bar E^2\operatorname{Var}[I]
-2\bar E(1+\bar I)\operatorname{Cov}[E,I]
\right).
\]
\end{proof}

\synapticclustering*

% Thight proof

\begin{proof}
Under equal class means and a common nonzero scalar derivative, the delta-method derivative factor cancels from the variance ratio. We assume that the nonzero derivative is guaranteed by our loss function.

Given
\[
W^\top\mathbf 1=s,\qquad
\|W\|^2=s.
\]

Let
\[
B_k=\sum_iP_ke^{(i)}e^{(i)\top}P_k^\top,\qquad
q_k(W)=W^\top B_kW.
\]
Using the covariance in Eq.~\ref{eq:cov_decomp},
\[
W^\top\Sigma_kW=(1-\rho_l)s+\rho_gs^2+(\rho_l-\rho_g)q_k(W).
\]
If $c_b$ is the number of selected synapses in target block $b$, then $q_k(W)=\sum_bc_b^2$. Thus
\[
q_k(W_{\rm cluster})=s^2,\qquad
q_k(W_{\rm split})=(s-1)^2+1,
\]
and the target variance difference is $2(s-1)(\rho_l-\rho_g)$.

For a uniform random competing partition, two selected inputs share a block with probability $(n-1)/(N-1)$. Therefore
\[
\mathbb E_{k'}[q_{k'}(W)]
=s+s(s-1)\frac{n-1}{N-1},
\]
which depends on $s$ but not on the target placement. This gives the denominator $D_s$ and the gap stated in the theorem. Also $\sum_bc_b^2$ is maximal at $s^2$ when all synapses are in one block, provided $s\leq n$.

The equality is only for an expected competitor; with three competing classes concentration is not guaranteed. The normalized gap is positive, but it may decrease as $s$ grows.
\end{proof}

\begin{theorem}[Excitatory-Inhibitory Diversity]\label{thm:shunting_ei}
For a leaf branch, fixed excitatory and inhibitory synapse counts with $m_e+m_i\leq n$, binary disjoint supports, and each population already clustered in one target block, the expected-competitor SNR is higher when the two populations use different target blocks than when they use the same block.
\end{theorem}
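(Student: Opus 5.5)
The plan is to reduce the comparison to a single covariance term, using the shunting variance approximation of Lemma~\ref{lem:shunting_var}, in the same spirit as the proof of Theorem~\ref{thm:clustering_normative}. Write $E=W_e^\top z$ and $I=W_i^\top z$, where $W_e,W_i$ are binary with disjoint supports of sizes $m_e,m_i$. The branch is a leaf, so there is no upstream $V_{out}$ term. By Lemma~\ref{lem:shunting_var}, the class-$k$ output variance is approximately
\[
\frac{\phi'(\bar V)^2}{(1+\bar E+\bar I)^4}\Big(a\,W_e^\top\Sigma_kW_e+b\,W_i^\top\Sigma_kW_i-c\,W_e^\top\Sigma_kW_i\Big),
\]
with $a=(1+\bar I)^2$, $b=\bar E^2$ and $c=2\bar E(1+\bar I)>0$ whenever $\bar E>0$.

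The first step is to argue that the prefactor and the coefficients $a,b,c$ are common to all classes. Each channel has the same marginal in every class, so $\bar E$ and $\bar I$ do not depend on $k$. As in the proof of Theorem~\ref{thm:clustering_normative}, the common nonzero derivative $\phi'(\bar V)$ therefore cancels in the expected-competitor ratio $\bar\sigma$. The quadratic form is linear in $\Sigma_k$, so the expectation over an independent uniform competitor $k'$ can be taken inside, term by term.

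Next, evaluate each term with the decomposition \eqref{eq:cov_decomp}.
\begin{itemize}
\item \emph{Variance terms.} Since each population is clustered in one target block (possible because $m_e+m_i\le n$), $W_e^\top\Sigma_kW_e=(1-\rho_l)m_e+\rho_g m_e^2+(\rho_l-\rho_g)m_e^2$, and likewise for $I$. These are identical in the same-block and different-block configurations.
\item \emph{Cross term.} The supports are disjoint, so only off-diagonal entries contribute, and $W_e^\top\Sigma_kW_i=\rho_g m_em_i+(\rho_l-\rho_g)\,\#\{\text{cross pairs sharing a class-}k\text{ block}\}$. This equals $\rho_l m_em_i$ in the same-block case and $\rho_g m_em_i$ in the different-block case.
\item \emph{Competitor terms.} For a uniform random competing partition, any two distinct inputs share a block with probability $(n-1)/(N-1)$. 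Hence $\mathbb E_{k'}[W_e^\top\Sigma_{k'}W_i]=m_em_i\big(\rho_g+(\rho_l-\rho_g)\tfrac{n-1}{N-1}\big)$, and the competitor variance terms are as in Theorem~\ref{thm:clustering_normative}. All of these depend only on $m_e,m_i$, not on placement.
\end{itemize}

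So the denominator $D$ of $\bar\sigma$ is the same for both configurations. The numerators differ only through $-c\,W_e^\top\Sigma_kW_i$, which gives
\[
\bar\sigma_{\rm diff}-\bar\sigma_{\rm same}=\frac{c(\rho_l-\rho_g)m_em_i}{D}>0.
\]
Intuitively, putting inhibition on the same block as excitation raises $\operatorname{Cov}[E,I]$, and shunting then cancels the target-class variance.

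The main obstacle is justifying positivity and class-invariance rather than the algebra. Specifically, we must show that $D>0$, i.e.\ that the expected competitor variance under the delta method stays positive despite the $-c$ term, and that $c>0$. I would first try to show that $D$ is, up to the common positive prefactor, a positive multiple of $\mathbb E_{k'}\operatorname{Var}[y\mid k']$. This follows because it is an average of quadratic forms $v^\top\Sigma_{k'}v$ with $v=\sqrt{a}\,W_e-\tfrac{c}{2\sqrt a}W_i$ plus the nonnegative remainder $(b-c^2/4a)\,W_i^\top\Sigma W_i$; note that $b-c^2/(4a)=0$, so $D$ is an average of PSD forms. It is strictly positive because $\Sigma_{k'}\succ0$ and $v\neq0$. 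I would also state explicitly the assumptions $\bar E>0$ and a nonzero common $\phi'$, mirroring Theorem~\ref{thm:clustering_normative}.
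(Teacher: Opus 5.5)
Your proposal is correct and is essentially the paper's argument. The paper writes the leaf variance as $d^\top\Sigma_k d$ with $d=(1+\bar I)W_e-\bar E W_i$ (exactly your $v$), compares the block terms $\bigl((1+\bar I)m_e-\bar E m_i\bigr)^2$ and $(1+\bar I)^2m_e^2+\bar E^2m_i^2$, whose difference $2(1+\bar I)\bar E\,m_em_i$ (times $\rho_l-\rho_g$) is your cross-term gap, and notes that the expected competitor term does not depend on placement. Your positivity check for the denominator and your explicit assumption $\bar E>0$ are small additions that the paper leaves implicit.
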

\begin{proof}
Let
\[
d=(1+\bar I)W_e-\bar E W_i.
\]
Apart from a common positive delta-method factor, the leaf variance is $d^\top\Sigma_kd$.

Let $a=1+\bar I$ and $b=\bar E$. If excitation and inhibition use the same target block, the block term is $(am_e-bm_i)^2$. If they use different target blocks, it is $(am_e)^2+(bm_i)^2$. The second case is larger by
\[
2abm_em_i=2(1+\bar I)\bar E m_em_i>0.
\]
After multiplying by $\rho_l-\rho_g>0$, target variance is larger. Under the stated disjoint-support and random-competitor assumptions, the expected competitor term is the same in both cases.

This is only a comparison of these two placements. It does not prove a global maximum for the full dendritic tree.
\end{proof}

%%%%%%%%%%%%%%%%%%%%%%%%%%%%%%%%%%%%%%%%%%%%%%%%%%%%%%%%%%%%

\newpage

\end{document}